\documentclass[12pt]{article}

\usepackage{amsmath}
\usepackage{amsthm}
\usepackage{amsfonts}
\usepackage{amssymb}
\usepackage{latexsym}
\usepackage{amscd}
\usepackage{stmaryrd}
\usepackage{amsbsy}

\allowdisplaybreaks
\setlength{\oddsidemargin}{0cm}
\setlength{\evensidemargin}{0cm}
\setlength{\topmargin}{-1.3cm}
\setlength{\textheight}{24cm}
\setlength{\textwidth}{16cm}

\newcommand{\be}{\begin{eqnarray*}}
\newcommand{\en}{\end{eqnarray*}}
\newcommand{\bea}{\begin{eqnarray}}
\newcommand{\ena}{\end{eqnarray}}


\newcommand{\C}{{\mathbb C}}
\newcommand{\Z}{{\mathbb Z}}

\newcommand{\slt}{\mathfrak{sl}_2}

\newcommand{\res}{{\rm res}}

\newcommand{\tr}{{\rm tr}}

\newcommand{\h}{\mathfrak{h}}

\pagestyle{plain}

\def\sl{\operatorname{\mathfrak{sl}}}

\def\g{\operatorname{\mathfrak{g}}}
\def\tr{\operatorname{tr}}


\numberwithin{equation}{section}

\theoremstyle{plain}
\newtheorem{thm}{Theorem}[section]

\newtheorem{lem}[thm]{Lemma}
\newtheorem{prop}[thm]{Proposition}
\newtheorem{dfn}[thm]{Definition}
\newtheorem{exmp}[thm]{Example}
\newtheorem{re}[thm]{Remark}
\newtheorem{conj}[thm]{Conjecture}

\begin{document}

\title{Confluent KZ equations for $\sl_N$ with Poincar\'e rank $2$ at infinity}
\date{\today}

\author{Hajime Nagoya\footnote{Present address: 
Department of Mathematics, Kobe University,
 Kobe 657-8501, Japan, 
 Research Fellow of the Japan Society for the Promotion of Science} 
 \\ 
Graduate School of Mathematical Sciences, The
University of Tokyo, \\Tokyo 153-8914, Japan
\\
e-mail: nagoya@math.kobe-u.ac.jp
\\
and 
\\
Juanjuan Sun
\\
Graduate School of Mathematical Sciences, The
University of Tokyo, \\Tokyo 153-8914, Japan
\\
email: sunjuan@ms.u-tokyo.ac.jp
}

\maketitle

\begin{abstract}
We construct confluent KZ equations   with Poincar\'e rank 2 at infinity for the case of $\frak{sl}_N$
and the integral representation for the solutions.
Hamiltonians of these confluent KZ equations are derived from suitable quantization of
$d\log \tau$ constructed in the theory of monodromy preserving deformation in \cite{JMU}.
 Our confluent KZ equations can be viewed as a quantization of monodromy preserving deformation
 with Poincar\'e rank 2 at infinity.

\end{abstract}
MSC 2000:32G34, 17B80, 34M55, 37J35, 81R12, 33C15.

\section{Introduction}

The KZ equation is a system of linear partial differential
equations with regular singularities,  and it has integral
formulas of hypergeometric type for solutions \cite{SV1}. Further, the KZ
equation  is  a quantization of the Schlesinger equation,  which describes  monodromy preserving deformation
(MPD) of linear differential equations with regular singularities
\cite{H}, \cite{R}, and \cite{Takasaki}.

Irregular singular versions of the KZ equation have been considered in some cases.
Generalized KZ
equations with Poincar\'e rank 1 at infinity for $\slt$ were presented  in \cite{BK}   and later for any simple Lie
algebra in \cite{FMTV}. 
In \cite{JNS}, 
   confluent KZ
equations with an arbitrary Poincar\'e rank were obtained 
for $\slt$. 

A quantization of the Painlev\'e equations with affine Weyl group
actions was proposed in \cite{JNS}, \cite{N1},  and \cite{N2}.
In \cite{JNS},  the  quantum Painlev\'e equations $\mathrm{QP_J}$ 
($\mathrm{J=I,II,III,IV, V}$) 
  were derived formally.
 In all the above mentioned cases, the
solutions to the equations are expressed by integral formulas of
confluent hypergeometric type.

In this paper, we construct confluent KZ equations with Poincar\'e rank 2
at infinity for $\frak{sl}_N$ and  integral formulas of
confluent hypergeometric type for solutions. 
 Further, we discuss a connection between  
 our confluent KZ equations and a 
 quantization of MPD with Poincar\'e rank 2 at infinity. 

Hamiltonians for our confluent KZ equations are
obtained assuming suitable quantization of the function $d
\log \tau$. The function $\tau$ was given in the study of
MPD \cite{JMU}. The function $d \log \tau$ for  regular singularities
gives  the Hamiltonians of the Schlesinger equation.   Moreover,
it was shown in \cite{U} that
Hamiltonians of MPD with Poincar\'e rank 2 at infinity
for $\frak{sl}_2$  are derived from $d \log \tau$.
Therefore, we expect the exact formulas of Hamiltonians of MPD are obtained
 from
$d\log \tau$. Indeed, compatibility condition of the confluent KZ equation
implies that Hamiltonians of MPD (classical case) with Poincar\'e rank 2 at
infinity are derived from $d\log\tau$. 

Hamiltonians for the KZ equations are called the Gaudin Hamiltonians. 
The problem of diagonalization of the Gaudin Hamiltonians is called the Gaudin model \cite{G}.
 It was known  that the eigenvalues and eigenvectors of
 the Gaudin model are derived from the
 integral formulas for solutions to the KZ equation \cite{BF}.
  Irregular singular versions of the Gaudin model were also studied as well as the 
 standard Gaudin model. 
 In  \cite{FFT},
  higher Gaudin Hamiltonians were constructed  using non-highest
 weight representations of affine algebras, and eigenvectors of these Hamiltonians
 were constructed using Wakimoto modules of critical level.
It was explained in \cite{FFT} that there is a connection between the
irregular singular version of the Gaudin model and the geometric Langlands
correspondence.

The remainder of this paper is organized as follows. In section 2, we present
notations of the truncated Lie algebras and define confluent Verma modules. 
In section 3, recalling the
definition of Jimbo-Miwa-Ueno's tau function $\tau$,  we define Hamiltonians
as suitable quantization of $d \log \tau$, and we give confluent KZ equations. 
In section 4, we present an integral formula 
\begin{equation}
\int_{\Gamma} \Phi^{\frac{1}{\kappa}}\omega
\end{equation}
as a solution to the confluent KZ equation. Here $\Phi$ is a scalar multi-valued master function,  
$\Gamma$ is an appropriate cycle, and 
 $\omega$ is a differential form taking values in a tensor product 
 of confluent Verma 
 modules. The master function $\Phi$ and the differential form $\omega$ 
 are straightforward 
generalizations of the respective parts of the integral formulas for the 
general KZ equations in \cite{FMTV} and confluent KZ equations for
$\frak{sl}_2$ in \cite{JNS}. The differential form $\omega$ is
represented in terms of 
 a Poincar\'e-Birkhoff-Witt
basis,  as shown in \cite{M}. 
In section 5, we  discuss a connection between 
our confluent KZ equations and a quantization of monodromy preserving deformation
 with Poincar\'e rank 2 at infinity.

\section{Preliminary}
\subsection{Notation}
Let $\g$ be the complex simple Lie algebra $\sl_{N}$ and let $\h$ be
the Cartan subalgebra. We denote by $\Pi$, $\Delta$, $\Delta_+$, $Q$ and $Q_+$
the set of simple
roots of $\g$,
the root system, the set of positive roots,  the set of the root lattice, and the set of the positive part of
the root lattice, respectively. Explicitly, we have
\begin{align*}
&\Pi=\{\alpha_1,\ldots,\alpha_{N-1}\},\\
&\Delta=\{\pm(\alpha_i+\cdots+\alpha_j)|1\leq i\leq j\leq N-1\},\\
&\Delta_+=\{\alpha_i+\cdots+\alpha_j|1\leq i\leq j\leq N-1\},\\
&Q=\sum_{p=1}^{N-1}\Z\alpha_p,
\quad Q_+=\sum_{p=1}^{N-1}\Z_{\geq0}\alpha_p.
\end{align*}

We shall use the following sets of roots:
\begin{align*}
J_p&=\{\alpha|\alpha\in\Delta_+\cap\left(\alpha_p+Q_+\right)\},\\
R_p&=\{\alpha_p,\alpha_p+\alpha_{p+1},\ldots,\alpha_p+\alpha_{p+1}+\cdots+\alpha_{N-1}\},\\
C_{p}&=\{\alpha_1+\alpha_2+\cdots+\alpha_p,\alpha_2+\cdots+\alpha_p,\ldots,\alpha_p\},
\end{align*}
for $1\leq p\leq N-1$.   We define
the following linear order of all positive roots.
\begin{dfn}
Let $\alpha=\alpha_p+\alpha_{p+1}+\cdots+\alpha_q$ ($p\leq q$), and
$\alpha'=\alpha_{p'}+\alpha_{p'+1}+\cdots+\alpha_{q'}$ ($p'\leq
q'$). We define $\alpha>\alpha'$ if $p<p'$ or $p=p'$ and $q<q'$.
\end{dfn}

The Lie algebra $\g$ has a basis
$\{e_{\alpha},e_{-\alpha}\;(\alpha\in\Delta_+),h_{p}\;(p=1,\ldots,N-1)\}$,
and dual basis
$\{e_{-\alpha},e_{\alpha}\;(\alpha\in\Delta_+),w_p\;(p=1,\ldots,N-1)\}$.
They satisfy the following commutation relations:
\begin{align*}
&[h,h']=0\quad\quad\quad\quad\quad\quad\;\; \text{if}\; h,h'\in\mathfrak{h},\\
\rule[0cm]{.0cm}{0.5cm}&[h,e_{\alpha}]=\alpha(h)e_{\alpha}\quad\quad\quad\quad
\text{if}\;h\in\mathfrak{h},\alpha\in\Delta,\\
\rule[0cm]{.0cm}{0.5cm}&[e_{\alpha},e_{-\alpha}]=h_{\alpha}\quad\quad\quad\quad\quad
\text{if}\;\alpha\in\Delta_+,\\
\rule[0cm]{.0cm}{0.5cm}&[e_{\alpha},e_{\beta}]=0\quad\quad\quad\quad\quad\quad\;
\text{if}\;\alpha,\beta\in\Delta,\alpha+\beta\notin\Delta\cup\{0\},\\
\rule[0cm]{.0cm}{0.5cm}&[e_{\alpha},e_{\beta}]=\epsilon(\alpha,\beta)e_{\alpha+\beta}\quad\quad
\text{if}\;\alpha,\beta,\alpha+\beta\in\Delta.
\end{align*}
Here $h_{\alpha}=h_{a_p}+h_{\alpha_{p+1}}+\cdots+h_{\alpha_q}$ for
$\alpha=\alpha_p+\alpha_{p+1}+\cdots+\alpha_q$ with
$h_{\alpha_p}=h_p$, and the function $\epsilon$ is defined by
\begin{align*}
\epsilon(\alpha,\beta)=
                         \begin{cases}
                           1, & \alpha>\beta, \\
                           -1, & \alpha<\beta.
                         \end{cases}
\end{align*}
 The Casimir operator is defined by
\begin{align*}
\Omega=\sum\limits_{\alpha\in\Delta_+}(e_{\alpha}\otimes
e_{-\alpha}+e_{-\alpha}\otimes
e_{\alpha})+\sum^{N-1}\limits_{p=1}h_p\otimes w_p.
\end{align*}

The homomorphism defined below gives a natural representation of
$\g$ on $\C^N$: for
$\alpha=\alpha_p+\alpha_{p+1}+\cdots+\alpha_{q}$,
$e_{\alpha}=E_{p,q+1}$, $e_{-\alpha}=E_{q+1,p}$;
$h_p=E_{p,p}-E_{p+1,p+1}$ and
$$w_p=\sum^p\limits_{a=1}(1-\frac{p}{N})E_{a,a}-\frac{p}{N}\sum^{N}\limits_{a=p+1}E_{a,a},$$
where $E_{p,q}$ is the matrix with the $(p,q)$-entry $1$ and others
$0$.


\subsection{Module}
In this subsection, we define a confluent Verma module which is
a natural generalization of a Verma module and will be used
in subsequent sections. For $\slt$, a confluent Verma module was
defined in \cite{JNS}.

 Let $V_i$ be Verma
modules of $\g$ with respect to highest weights $\Lambda^{(i)}$ and
highest weight vectors $v_i$, for $i=1,\ldots,n$.

Let $\g_{(2)}$ be the truncated Lie algebra $t\g[t]/(t^3\g[t])$, where
$\g[t]=\g\otimes \mathbb{C}[t]$. We denote $x\otimes t^p$ by $x[p]$. 
Let $V^{(\infty)}$ be defined as a cyclic $\g_{(2)}$-module with highest weights
$\Lambda^{(\infty)}_p$ ($p=1,2$) generated by a vector $v_{\infty}$
such that
\begin{align*}
&e_{\alpha}[1]v_{\infty}=0,
\quad
h_{\alpha}[p]v_{\infty}=(\Lambda^{(\infty)}_p,\alpha)v_{\infty}, 
\quad (\Lambda^{(\infty)}_2,\alpha)\neq 0 
\quad(\forall\alpha\in\Delta_+,\;p=1,2), 
\end{align*}
and $e_{\pm \alpha}[2]$ $(\alpha\in\Delta_+$, $p=1,2$) act as $0$ on $V^{(\infty)}$. 


The action of $\g_{(2)}$ on $V^{(\infty)}$ is simple. The elements $h_{\alpha}[p]$ 
$(\alpha\in\Delta_+$, $p=1,2$) act as scalars on $V^{(\infty)}$. 
The action of $e_{\alpha}[1]$ $(\alpha\in\Delta_+$, $p=1,2$) is only non-commutative 
with the action of $e_{-\alpha}[1]$. 
We denote 
$(\Lambda^{(\infty)}_1,\alpha)$ by 
 $\gamma_\alpha$  and $(\Lambda^{(\infty)}_2,\alpha)$ by 
$\mu_\alpha$.

Let the module $V^{(\infty)}$ be equipped with 
differential operators with respect to $\gamma_k(:=\gamma_{\alpha_k})$ and
$\mu_k(:=\mu_{\alpha_k})$  ($k=1,\ldots,N-1$)  as follows:
Let the differential operators $\partial /\partial \gamma_k$ and 
$\partial /\partial \mu_k$ 
be defined as 
\begin{align*}
&\frac{\partial }{\partial \gamma_k} (e_{\pm \alpha}[1])=0, \quad
\frac{\partial }{\partial \gamma_k} (h_{ \alpha}[1])=1,\quad
\frac{\partial }{\partial \gamma_k} (h_{ \alpha}[2])=0,
\\
&
\frac{\partial }{\partial \mu_k} (e_{\pm \alpha}[1])=\frac{1}
{2\mu_\alpha}(e_{\pm \alpha}[1]) \quad
(\alpha\in J_k), \quad \frac{\partial }{\partial \mu_k} (e_{\pm \alpha}[1])=0 \quad
(\alpha\not\in J_k),
\\
&\frac{\partial }{\partial \mu_k} (h_{ \alpha}[1])=0,\quad
\frac{\partial }{\partial \mu_k} (h_{ \alpha}[2])=1\quad 
(\alpha\in J_k), \quad \frac{\partial }{\partial \mu_k} (h_{ \alpha}[2])=0\quad 
(\alpha\not\in J_k), 
\\
&
\frac{\partial }{\partial \gamma_k}(v_\infty)=
\frac{\partial }{\partial \mu_k}(v_\infty)=0, 
\end{align*}
for $\alpha\in\Delta_+$. Here $x[p]$ ($x=e_{\pm\alpha},h_\alpha$, $p=1,2$)
 is the action on $V^{(\infty)}$. We can verify that the differential operators 
 $\partial /\partial \gamma_k$ and 
$\partial /\partial \mu_k$ preserve the commutation relations of 
the action of $\g_{(2)}$ on $V^{(\infty)}$. Hence, the differential operators 
 $\partial /\partial \gamma_k$ and 
$\partial /\partial \mu_k$ on $V^{(\infty)}$ are well-defined.
We call $V^{(\infty)}$ a confluent Verma module. 
%
%
%
%
%
%
\begin{exmp}
Let $\frak{g}=\frak{sl}_3$. Then a confluent Verma module 
$V^{(\infty)}$ is realized as a polynomial ring $\C[x_1,x_2,x_3]$. 
The actions of $x[p]$ ($x=e_{\pm\alpha},h_\alpha$, $p=1,2$) are expressed as 
\begin{align*}
&e_{\alpha_i}[1]=\mu_i^{\frac{1}{2}} \partial_i, \quad
e_{-\alpha_i}[1]=\mu_i^{\frac{1}{2}} x_i, \quad
e_{\pm\alpha_i}[2]=0
\quad
(i=1,2),
\\
&e_{\alpha_1+\alpha_2}[1]=(\mu_1+\mu_2)^{\frac{1}{2}}\partial_3, \quad
e_{-\alpha_1-\alpha_2}[1]=(\mu_1+\mu_2)^{\frac{1}{2}}x_3,
\quad
e_{\pm(\alpha_1+\alpha_2)}[2]=0,
\\
&h_{\alpha_i}[1]=\gamma_i,\quad h_{\alpha_i}[2]=\mu_i,\quad (i=1,2),
\end{align*}
where $\partial_i=\partial/ \partial x_i$.
\end{exmp}

We consider $(\oplus_{i=1}^n\g\oplus\g_{(2)})$-module
\begin{equation}
V=V_1\otimes\cdots V_n\otimes  V^{(\infty)}
\end{equation}
with $\pmb{v}=v_1\otimes \cdots \otimes v_n\otimes v_\infty$. 
For each $i=1,\ldots,n$,  denote by $x^{(i)}:V\to V $ ($x\in\g$) the linear operator 
acting as $x$   
 on $i$th tensor factor
 $V_i$ and as identities on the others. For $x[p]\in\g_{(2)}$, 
  denote by $x^{(\infty)}[p]:V\to V$ the linear operator acting as 
  $x[p]$ on $V^{(\infty)}$ 
 and as identities on the others.

Let 
$\Lambda=\sum^n\limits_{i=1}\Lambda^{(i)}$, ${\bf m}=(m_1,\ldots,m_{N-1})\in(\mathbb{Z}_{\ge 0})^{N-1}$, and a map
$\alpha: (\mathbb{Z}_{\ge 0})^{N-1}\longrightarrow Q_+$ be defined as 
$\alpha({\bf m})=\sum_{p=1}^{N-1} m_p\alpha_p$. We denote by $V_{\bf{m}}$ the
weight space of $V$ with weight $\Lambda-\alpha(\bf{m})$
corresponding to $\sum^n\limits_{i=1}h^{(i)}+h^{(\infty)}[0]$, ($h\in\frak{h}$), 
that is, 
\begin{equation*}
V_{\bf{m}}=\left\{
x\in V \Big| \left(\sum^n_{i=1}h^{(i)}+h^{(\infty)}[0]\right) (x)=
\left(\Lambda-\alpha(\bf{m})\right)(h)x, \ h\in\frak{h}
\right\}.
\end{equation*}
Here, for $h\in\frak{h}$, $h^{(\infty)}[0]$ is defined as 
\begin{equation*}
h^{(\infty)}[0]=-\sum_{\alpha\in\Delta_+}
\frac{\alpha(h)}{\mu_\alpha}
e^{(\infty)}_{-\alpha}[1]
e^{(\infty)}_{\alpha}[1].
\end{equation*} 


\section{Confluent KZ equation}
\subsection{Hamiltonian}
Let $$\frac{\partial Y}{\partial x}=A(x)Y$$ be a system of linear
ordinary differential equations, where $A(x)$ is a rational matrix.
In \cite{JMU} and \cite{JM}, Jimbo, Miwa and Ueno developed a
general theory of monodromy preserving deformation. They derived
non-linear deformation equations and proved their complete
integrability. They also gave explicit formula for a $1$-form
$\omega$ expressed in terms of the coefficients of $A(x)$, with the
property $d\omega=0$ for each solution of the deformation equations. In
\cite{Boalch}, Boalch showed that the deformation equations are
Hamiltonian systems. However, the explicit formula for the
Hamiltonian has not been given. Here we show that after an
appropriate ordering the $1$-form $\omega$ indeed provides the
Hamiltonians in the quantum case at least up to Poincar\'e rank $2$. In the
follwoing we give the concrete construction of the Hamiltonians. For
that purpose, we recall the procedure of \cite{JMU}.

 Let us
consider the following system of linear ordinary differential
equations for an $N\times N$ matrix $Y(z)$ on $\mathbb{P}^{1}$,
\begin{equation}
\frac{d Y}{d z}=A(z)Y \label{A1}
\end{equation}
where
$$A(z)=\sum^{n}\limits_{i=1}\sum^{r_i}\limits_{p=0}\frac{A^{(i)}_{p}}{(z-z_i)^{p+1}}
-\sum^{r}\limits_{p=1}z^{p-1}B_p$$ with $A^{(i)}_{p}$
($i=1,\ldots,n$) and $B_p$ ($p=1,\ldots,r$) are $N\times N$
matrices, and $r_i,r$ are non-negative integers.

 We assume
\begin{align*}
B_r=\begin{pmatrix} t_1&&\\
&\ddots&\\
&&t_N
\end{pmatrix}
\end{align*}
with $t_i\neq t_j$ for $i\neq j$. For simplicity, we consider the
case where all $r_i=0$ ($i=1,\ldots,n$) and $r$ is a positive
integer.

Now we consider the local solution of \eqref{A1} at $z=\infty$.
\begin{prop}[\cite{JMU}]
There exists a unique formal series $Y(z)$ at $z=\infty$ of the
following form:
\begin{equation}
Y(z)=\hat{Y}(z)e^{T(z)}
\end{equation}
which solves \eqref{A1} i.e.
\begin{equation}
\frac{d}{dz}\hat{Y}(z)=A(z)\hat{Y}(z)-\hat{Y}(z)\frac{d}{dz}T(z).\label{A3}
\end{equation}
Here $T(z)$ is a diagonal matrix of the form
\begin{equation}
T(z)=-\sum^{r}\limits_{p=1}T_pz^p/p-T_0\log z,
\end{equation} with $T_r=B_r$ and $\hat{Y}(z)$ is a formal power series at
$z=\infty$:
\begin{equation}
\hat{Y}(z)=\sum^{\infty}\limits_{p=0}Y_pz^{-p},
\end{equation}with $Y_0=1$.
\end{prop}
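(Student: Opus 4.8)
The plan is a direct substitution-and-recursion argument. First I would plug the ansatz $Y(z)=\hat Y(z)e^{T(z)}$ into \eqref{A1}; since $T(z)$ is diagonal, $\frac{d}{dz}T(z)$ commutes with $e^{T(z)}$, so after cancelling the invertible factor $e^{T(z)}$ the equation becomes exactly \eqref{A3}, namely $\frac{d}{dz}\hat Y=A\hat Y-\hat Y\frac{d}{dz}T$. Then I would expand all objects as formal series at $z=\infty$: with all $r_i=0$,
\[
A(z)=-\sum_{p=1}^{r}B_p z^{p-1}+\sum_{k\ge 1}C_k z^{-k},\qquad C_k:=\sum_{i=1}^{n}z_i^{\,k-1}A^{(i)}_0,
\]
while $\frac{d}{dz}T(z)=-\sum_{p=1}^{r}T_p z^{p-1}-T_0 z^{-1}$, $\hat Y(z)=\sum_{p\ge 0}Y_p z^{-p}$ with $Y_0=1$, and $\frac{d}{dz}\hat Y(z)=-\sum_{p\ge 1}pY_p z^{-p-1}$. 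Substituting into \eqref{A3} and equating coefficients of each power $z^m$ produces a sequence of matrix equations, to be solved in decreasing order of $m$ starting at $m=r-1$, with unknowns $T_r,T_{r-1},\dots$ and $Y_1,Y_2,\dots$.

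The engine of the recursion is the hypothesis that $B_r=\mathrm{diag}(t_1,\dots,t_N)$ has pairwise distinct diagonal entries: then $\mathrm{ad}_{B_r}=[B_r,\cdot\,]$ acts on a matrix $X$ by $X_{ij}\mapsto(t_i-t_j)X_{ij}$, so it annihilates exactly the diagonal matrices and is a linear bijection of the off-diagonal matrices. The coefficient of $z^{r-1}$ gives $-B_rY_0+Y_0T_r=0$, i.e. $T_r=B_r$. For $m=r-1-k$ with $k\ge1$, the coefficient of $z^{r-1-k}$ in \eqref{A3} takes the form $[B_r,Y_k]=R_k$, where the only new datum is $Y_k$ (it enters only through the leading term $-B_r z^{r-1}$ of $A$ and the leading term of $\frac{d}{dz}T$) and $R_k$ is an explicit expression in the coefficients of $A$, in $T_r,\dots$, and in $Y_1,\dots,Y_{k-1}$, together with the term $(k-r)Y_{k-r}$ from $\frac{d}{dz}\hat Y$ once $k>r$. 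Splitting this equation into diagonal and off-diagonal parts, the off-diagonal part determines the off-diagonal entries of $Y_k$ uniquely through $\mathrm{ad}_{B_r}^{-1}$; the diagonal part reads $0=(R_k)_{\mathrm{diag}}$ and, for $1\le k\le r$, defines the diagonal matrix $T_{r-k}$ (with $k=r$ producing $T_0$), while for $k>r$ it determines the still-free diagonal entries of $Y_{k-r}$. That this last step is always solvable is transparent from the splitting viewpoint: since $B_r$ is diagonal with distinct entries, the formal connection decomposes into rank-one pieces, and in a rank-one meromorphic connection the polar part of order $\ge2$ of the connection form can always be removed by a power-series gauge transformation—one integrates it term by term, and $\int z^{-m}\,dz$ is a monomial for $m\ge2$, so no logarithm is produced. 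Iterating yields $\hat Y$ and $T$, which gives existence.

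For uniqueness, suppose $\hat Y e^T=\hat Y'e^{T'}$ with both factorizations of the prescribed shape. Diagonal matrices commute, so $(\hat Y')^{-1}\hat Y=e^{T'-T}$; the left side is a formal power series in $z^{-1}$ with constant term $1$, whereas $T'-T$ is diagonal with entries that are polynomials in $z$ of degree $\le r$ with zero constant term plus scalar multiples of $\log z$, so $e^{T'-T}$ is such a power series only when $T'-T=0$, whence $\hat Y=\hat Y'$. The step I expect to cost the most effort is the bookkeeping in the middle paragraph around the diagonal blocks: because the diagonal parts of the $Y_k$ are not fixed at the high orders where the $T_p$ are determined, one must check carefully that the recursion is genuinely triangular—no equation secretly involving a diagonal block not yet available—and that the final linear systems for $\mathrm{diag}(Y_k)$ are non-degenerate. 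The distinctness of the $t_i$ handles the off-diagonal step at every order and underlies the rank-one splitting that makes the diagonal step unobstructed, so once this organization is in place the proposition—the cited result of \cite{JMU}—follows.
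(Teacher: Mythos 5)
The paper itself offers no proof of this proposition---it is quoted from \cite{JMU}, and the paper's equations \eqref{A6}--\eqref{A7} right after the statement reproduce the key device of the JMU argument---so the comparison is with that route. Your direct recursion, as organized, has a genuine gap: it is not well-founded. At order $z^{r-1-k}$ the equation reads $[Y_k,B_r]=R_k$, but $R_k$ contains the terms $-B_{r-j}Y_{k-j}+Y_{k-j}T_{r-j}$ for $1\le j\le r-1$, and the matrices $B_{r-j}$ ($j\ge 1$) are \emph{not} diagonal; hence $(B_{r-j})_{OD}\,(Y_{k-j})_D$ enters both the off-diagonal and the diagonal part of the order-$k$ equation. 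In your scheme $(Y_{m})_D$ is only fixed at the later step $m+r$, so at every step $k\ge 2$ the ``known'' side $R_k$ involves diagonal blocks that are still free; the dependence does not disappear under reordering, since the diagonal equation at order $k$ in turn involves $(Y_{k-1})_D,(Y_{k-2})_D,\dots$ (only the $j=1$ contribution cancels, using $T_{r-1}=(B_{r-1})_D$). Worse, where your diagonal step is available it has the shape $-\bigl(k-r+(C_1)_{pp}+(T_0)_{pp}\bigr)(Y_{k-r})_{pp}=\text{(other terms)}$, whose scalar coefficient can vanish for special parameter values (a resonance), so the isolated equation then fails to determine that entry even though the proposition is unconditional. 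The ``rank-one splitting'' remark only explains why the scalar diagonal equation can be integrated without producing logarithms at pole order $\ge 2$; it does not repair the circular bookkeeping or the resonances, which is exactly the step you flagged as delicate.

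The standard repair is the factorization the paper writes down immediately after the proposition: set $\hat Y=F(z)D(z)$ as in \eqref{A6} with $F-1$ diagonal-free and $D$ diagonal. Substituting gives \eqref{A7}; taking the diagonal part (and using that $B_r$ is diagonal, so $(B_rF_k)_D=0$) yields $\frac{d}{dz}(\log D+T)=(A F)_D$ and $\frac{d}{dz}F+F\,(AF)_D=AF$. The latter is a genuinely triangular recursion in the off-diagonal coefficients $F_k$ alone---each step is $[F_k,B_r]=(\text{data in }F_1,\dots,F_{k-1})$, solved uniquely because $\mathrm{ad}_{B_r}$ is invertible on off-diagonal matrices (this is the only place the distinctness of the $t_i$ is used), and the diagonal part of that equation is automatically zero. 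Then $T_r=B_r$, $T_{r-1},\dots,T_0$ and $\log D$ are read off by formal term-by-term integration of the diagonal equation: the $z^{-1}$ coefficient defines $T_0$, and the coefficients of $z^{-m}$, $m\ge 2$, integrate to $\log D$ with no obstruction. Uniqueness then comes directly from this recursion. Note also that your separate uniqueness paragraph starts from $\hat Y e^{T}=\hat Y' e^{T'}$, i.e.\ it assumes the two formal solutions already coincide as products, which is part of what must be proved; the correct argument is that two solutions of \eqref{A1} of this form differ by a constant right factor $C$, and one checks entrywise that $e^{T_i(z)-T'_j(z)}C_{ij}$ can be a formal power series with the prescribed leading behaviour only if $C=1$ and $T=T'$, again using $t_i\neq t_j$.
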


The formal power series
$\hat{Y}(z)=1+\sum^{\infty}\limits_{p=1}Y_pz^{-p}$ is uniquely
factorized into
\begin{equation}
\hat{Y}(z)=F(z)D(z),\label{A6}
\end{equation}where
$$F(z)=1+\sum^{\infty}\limits_{p=1}F_pz^{-p},\quad
 D(z)=1+\sum^{\infty}\limits_{p=1}D_pz^{-p}.$$ We have $F(z)-1$ is diagonal
 free, and $D(z)$ is diagonal. Substituting \eqref{A6} to \eqref{A3} we obtain
 \begin{equation}
 \frac{d}{dz}F(z)+F(z)\frac{d}{dz}(\log
 D(z)+T(z))=A(z)F(z).\label{A7}
 \end{equation}
 Taking the diagonal part, we obtain
 \begin{equation}
 \frac{d}{dz}(\log D(z)+T(z))=(A(z)F(z))_D
 \end{equation}
 Then \eqref{A7} reads as
 \begin{equation}
 \frac{d}{dz}F(z)+F(z)(A(z)F(z))_D=A(z)F(z).
 \end{equation}
Here and below, we denote by
$X_D=(\delta_{i,j}X_{jj})_{i,j=1,\ldots,N}$ (resp. $X_{OD}=X-X_D$)
the diagonal part (resp. off-diagonal part) of a matrix
$X=(X_{i,j})_{i,j=1,\ldots,N}$.

The authors of \cite{JMU} derived a complete integrable non-linear
deformation equations whose deformation parameters are $t^{(k)}_{\nu}$,
where
\begin{align*}
T_k=\begin{pmatrix} t^{(k)}_1&&\\
&\ddots&\\
&&t^{(k)}_N
\end{pmatrix}\quad(k=1,\ldots,r).
\end{align*}
 They also gave an explicit $1$-form $\omega$ with the property $d\omega=0$
 for each solution of the deformation equations. The formula for $\omega$ reads as
\begin{align}
&\omega=-\res_{z=\infty}\tr\hat{Y}^{-1}(z)\frac{\partial\hat{Y}(z)}{\partial
z}d'T(z),
\end{align}
here $d'$ is the exterior differentiation with respect to the
parameters $t$.

For $k=1,\ldots,r$, we set
\begin{equation*}
\omega^{(k)}=\sum^{N}\limits_{p=1}H^{(k)}_{p}\text{d}t^{(k)}_{p}
\end{equation*}
and
\begin{equation*}
\omega=\sum^{r}\limits_{k=1}\sum^{N}\limits_{p=1}H^{(k)}_{p}\text{d}t^{(k)}_{p}
=\sum^r\limits_{k=1}\omega^{(k)}.
\end{equation*}


Set $T_p=0$ for $p<0$, and
$B_{-p}=\sum^n\limits_{i=1}A_0^{(i)}z_i^{p}$ for $p\geq0$. Rewriting
\eqref{A7} in terms of $F_p$, $D_p$, $B_p$ and $T_p$, we get the
form of $\omega$ which are presented by the matrix elements of
$B_{p}s$. We give the explicit form for the case $r=2$. We have
\begin{align}
\quad\omega^{(1)}&=\sum^{N}\limits_{p=1}\left(\sum_{k\neq p,s\neq
p}\frac{(B_1)_{pk}(B_1)_{ks}(B_1)_{sp}}{(t^{(2)}_p-t^{(2)}_k)(t^{(2)}_p-t^{(2)}_s)}-(B_{-1})_{pp}\right.
\label{formu1}
\\
&\left.\quad-\sum\limits_{k\neq
p}\frac{(B_1)_{pk}(B_{0})_{kp}+(B_0)_{pk}
(B_1)_{kp}}{t^{(2)}_p-t^{(2)}_k}+\sum\limits_{k\neq
p}\frac{(B_1)_{pk}(B_1)_{kp}}{(t^{(2)}_p-t^{(2)}_k)^2}\right)dt^{(1)}_p,
\nonumber
\end{align}
and
\begin{align}
\omega^{(2)}
=&\frac{1}{2}\sum^{N}\limits_{p=1}\left\{-\sum\limits_{k\neq p,s\neq
p}\frac{(B_1)_{pk}(B_0)_{ks}(B_1)_{sp}}{(t^{(2)}_p-t^{(2)}_s)(t^{(2)}_p-t^{(2)}_k)}-\sum\limits_{k,s\neq
p}\frac{(B_0)_{pk}(B_1)_{ks}(B_1)_{sp}}{(t^{(2)}_p-t^{(2)}_k)(t^{(2)}_p-t^{(2)}_s)}\right.
\label{formu2}
\\
&-\sum\limits_{k\neq p,s\neq
p}\frac{(B_1)_{pk}(B_1)_{kp}(B_1)_{ps}(B_1)_{sp}}{(t^{(2)}_p-t^{(2)}_s)(t^{(2)}_p-t^{(2)}_k)^2}
+\sum\limits_{k\neq
p}\frac{(B_1)_{pk}(B_1)_{kp}}{(t^{(2)}_p-t^{(2)}_k)^2}(B_0)_{pp}\nonumber\\
&+\sum\limits_{k,s,r\neq
p}\frac{(B_1)_{pk}(B_1)_{ks}(B_1)_{sr}(B_1)_{rp}}{(t^{(2)}_p-t^{(2)}_k)(t^{(2)}_p-t^{(2)}_s)(t^{(2)}_p-t^{(2)}_r)}-
\sum\limits_{k,s\neq
p}\frac{(B_1)_{pk}(B_1)_{ks}(B_0)_{sp}}{(t^{(2)}_p-t^{(2)}_k)(t^{(2)}_p-t^{(2)}_s)}
\nonumber
\\
&-\sum\limits_{k,s\neq
p}\frac{(B_1)_{pk}(B_1)_{ks}(B_1)_{sp}}{(t^{(2)}_p-t^{(2)}_k)(t^{(2)}_p-t^{(2)}_s)^2}t^{(1)}_p
-\sum\limits_{k,s\neq
p}\frac{(B_1)_{pk}(B_1)_{ks}(B_1)_{sp}}{(t^{(2)}_p-t^{(2)}_k)^2(t^{(2)}_p-t^{(2)}_s)}t^{(1)}_p\nonumber\\
&+\sum\limits_{k\neq
p}\frac{(B_1)_{pk}(B_0)_{kp}+(B_0)_{pk}(B_1)_{kp}}{(t^{(2)}_p-t^{(2)}_k)^2}t^{(1)}_p
+\sum\limits_{k\neq
p}\frac{(B_1)_{pk}(B_1)_{kp}}{(t^{(2)}_p-t^{(2)}_k)^3}(t^{(1)}_p)^2\nonumber\\
&\left.+\sum\limits_{k\neq
p}\frac{(B_0)_{pk}(B_0)_{kp}}{t^{(2)}_p-t^{(2)}_k}-\sum\limits_{k\neq
p}\frac{((B_1)_{pk}(B_{-1})_{kp}+(B_{-1})_{pk}(B_1)_{kp})}{t^{(2)}_p-t^{(2)}_k}
-(B_{-2})_{pp}\right\}dt^{(2)}_p. \nonumber
\end{align}

Here, $(B_i)_{pq}$ are the $(p,q)$-entries of the matrix $B_i$.

\vskip5pt

In the following, we consider the quantum case with restricting to
$\sl_N$ for $r=2$.

For $p=1,\ldots,N$, we define $\bar{H}^{(1)}_{p}$
(resp.$\bar{H}^{(2)}_{p}$) in terms of $H^{(1)}_{p}$ (resp.
$H^{(2)}_{p}$) as follows. First, we put $({B}_i)_{pq}$ with $p\leq
q$ to the right and the others to the left. Second,for
$\alpha=\alpha_p+\cdots+\alpha_{q}$, we substitute
$e^{(i)}_{\alpha}$, $e^{(i)}_{-\alpha}$, $h^{(i)}_p$,
 $e^{(\infty)}_{\alpha}[1]$, $e^{(\infty)}_{-\alpha}[1]$ and
 $h_p^{(\infty)}[1]$, $h_p^{(\infty)}[2]$ for the matrix realizations $(A^{(i)}_0)_{pq+1}$, $(A^{(i)}_0)_{q+1p}$, $(A^{(i)}_0)_{pp}-(A^{(i)}_0)_{p+1p+1}$,
 $(B_1)_{pq+1}$, $(B_1)_{q+1p}$  and $(B_1)_{pp}-(B_1)_{p+1p+1}$, $(B_2)_{pp}-(B_2)_{p+1p+1}$, respectively. The definitions induce that
 $\bar{H}^{(1)}_{p}$, $\bar{H}^{(2)}_{p}$ are actions of elements in the
 algebra $\oplus_{i=1}^n \g \oplus \g_{(2)}$ on the modules $V_1\otimes\cdots V_n\otimes V^{(\infty)}$

 For the actions of $h_p^{(\infty)}[1]=\gamma_p$, $h_p^{(\infty)}[2]=\mu_p$ on the $\g_{(2)}$-module $V^{(\infty)}$, we
 define the following Hamiltonians
\begin{dfn}
For $1\leq p\leq N-1$, we define
 \begin{align*}
 &\mathcal{H}^{(1)}_p=\sum^p_{j=1}(1-\frac{p}{N})\bar{H}^{(1)}_j
 -\frac{p}{N}\sum^N\limits_{j=p+1}\bar{H}^{(1)}_j,\\
&\mathcal{H}^{(2)}_p=\sum^p\limits_{j=1}(1-\frac{p}{N})\bar{H}^{(2)}_j
 -\frac{p}{N}\sum^N\limits_{j=p+1}\bar{H}^{(2)}_j
 +\sum_{\alpha,\beta\in J_p, \atop \alpha-\beta\in\Delta_+}
 \frac{\mu_{\alpha-\beta}}{\mu_\alpha^2\mu_\beta}e^{(\infty)}_{-\alpha}[1]e^{(\infty)}_\alpha[1],
 \end{align*}
where the last term in the second line is called the complement
term.
 More precisely, we have
 \begin{align*}
\mathcal{H}^{(1)}_p=&-\sum^n\limits_{i=1}z_iw^{(i)}_p-\sum\limits_{\alpha\in
J_p}\frac{1}{\mu_\alpha}
\left(e^{(\infty)}_{-\alpha}[1]E_{\alpha}+e^{(\infty)}_{\alpha}[1]E_{-\alpha}\right)
-\sum\limits_{\alpha\in J_p}\frac{\gamma_\alpha}{\mu_\alpha^2}
e^{(\infty)}_{-\alpha}[1]e^{(\infty)}_{\alpha}[1]
\\
&+\sum\limits_{\alpha,\alpha+\beta\in J_p,\atop
\beta\in\Delta_+}\frac{\epsilon(\alpha,\beta)}
{\mu_\alpha\mu_{\alpha+\beta}}
\left(e^{(\infty)}_{\alpha}[1]e^{(\infty)}_{\beta}[1]
e^{(\infty)}_{-\alpha-\beta}[1]
+e^{(\infty)}_{-\alpha}[1]e^{(\infty)}_{-\beta}[1]e^{(\infty)}_{\alpha+\beta}[1]\right),
\\
2\mathcal{H}^{(2)}_p=&-\sum^n\limits_{i=1}z_i^2w^{(i)}_p-\sum\limits_{\alpha\in
J_p}\frac{1}{\mu_\alpha}
\left(e^{(\infty)}_{-\alpha}[1]E_{-\alpha}(-1)+e^{(\infty)}_{\alpha}[1]E_{-\alpha}(-1)\right)
+\sum\limits_{\alpha\in
J_p}\frac{1}{\mu_\alpha}E_{-\alpha}E_{\alpha}
\\
&+\sum\limits_{\alpha\in
J_p}\frac{1}{\mu_\alpha^2}e^{(\infty)}_{-\alpha}[1]e^{(\infty)}_{\alpha}[1]H_{\alpha}
+\sum\limits_{\alpha\in J_p}\frac{\gamma_\alpha}{\mu_\alpha^2}
(e^{(\infty)}_{-\alpha}[1]E_{\alpha}+e^{(\infty)}_{\alpha}[1]E_{-\alpha})
\\
&-\sum\limits_{\alpha,\beta\in J_p,\atop
\alpha-\beta\in\Delta}\frac{\epsilon(\alpha,|\alpha-\beta|)}{\mu_\alpha
\mu_{\beta}} \left(E_{\alpha}e^{(\infty)}_{\beta-\alpha}[1]
e^{(\infty)}_{-\beta}[1] +e^{(\infty)}_{\alpha}[1]E_{\beta-\alpha}
e^{(\infty)}_{-\beta}[1]
+e^{(\infty)}_{\alpha}[1]e^{(\infty)}_{\beta-\alpha}[1]E_{-\beta}
\right)
\\
&+\sum\limits_{\alpha\in J_p}\frac{\gamma_\alpha^2}{\mu_\alpha^3}
e^{(\infty)}_{-\alpha}[1]e^{(\infty)}_{\alpha}[1]-\sum\limits_{\alpha,\beta\in
J_p,\atop\alpha-\beta\in\Delta}\frac{\epsilon(\alpha,|\alpha-\beta|)}{\mu_{\alpha}\mu_{\beta}}
\left(\frac{\gamma_{\alpha}}{\mu_{\alpha}}+\frac{\gamma_{\beta}}{\mu_{\beta}}\right)e^{(\infty)}_{\alpha}[1]e^{(\infty)}_{\beta-\alpha}[1]e^{(\infty)}_{-\beta}[1]
\\
&+\sum\limits_{\alpha,\beta,\gamma\in
J_p,\atop\alpha-\beta,\beta-\gamma\in\Delta}\frac{g(\alpha,\gamma)}{\mu_{\alpha}\mu_{\beta}\mu_{\gamma}}
e^{(\infty)}_{-\alpha}[1]e^{(\infty)}_{\alpha-\beta}[1]e^{(\infty)}_{\beta-\gamma}[1]e^{(\infty)}_{\gamma}[1]
\\
&-\sum_{\alpha,\beta,\gamma\in
J_p,\beta-\alpha-\gamma\in\Delta,\atop\alpha-\beta,\beta-\gamma\in\Delta\cup\{0\},\alpha\geq\gamma}
\frac{1}{\mu_{\alpha}\mu_{\beta}\mu_{\gamma}}
e^{(\infty)}_{-\beta}[1]e^{(\infty)}_{\beta-\alpha-\gamma}[1]e^{(\infty)}_{\alpha}[1]e^{(\infty)}_{\gamma}[1],
\end{align*}
where $X_{\alpha}=\sum_{i=1}^n x^{(i)}_{\alpha}$,
$X_{\alpha}(-1)=\sum_{i=1}^n x^{(i)}_{\alpha}z_i$
($\alpha\in\Delta_+$), and
\begin{align*}
|\alpha-\beta|&=\left\{
\begin{array}{ll}
\alpha-\beta\;\quad\text{if}\;\alpha-\beta\in Q_+\\
\beta-\alpha\;\quad\text{if}\;\beta-\alpha\in Q_+,
\end{array}\right.
\\
g(\alpha,\gamma)&=\left\{
\begin{array}{ll}
1\quad\quad\quad\text{if}\;\alpha-\gamma\in\Delta\cup\{0\},
\\
-1\quad\quad\text{others}.
\end{array}\right.
\end{align*}
\end{dfn}

\begin{exmp}\label{ex hami}
Let $\g=\sl_3$ and $n=0$. 
Then, Hamiltonians $\mathcal{H}^{(1)}_1$ and $\mathcal{H}^{(2)}_1$ are 
\begin{align*}
\mathcal{H}^{(1)}_1=&-\frac{\gamma_1}{\mu^2_1}e^{(\infty)}_{-\alpha_1}[1]e^{(\infty)}_{\alpha_1}[1]
-\frac{\gamma_1+\gamma_2}{(\mu_1+\mu_2)^2}e^{(\infty)}_{-\alpha_1-\alpha_2}[1]e^{(\infty)}_{\alpha_1+\alpha_2}[1]
\\
&+\frac{1}{\mu_1(\mu_1+\mu_2)}\left(e^{(\infty)}_{-\alpha_1}[1]e^{(\infty)}_{-\alpha_2}[1]e^{(\infty)}_{\alpha_1+\alpha_2}[1]
+e^{(\infty)}_{-\alpha_1-\alpha_2}[1]e^{(\infty)}_{\alpha_1}[1]e^{(\infty)}_{\alpha_2}[1]\right),
\end{align*}
and
\begin{align*}
2\mathcal{H}^{(2)}_1&=-\frac{1}{\mu^3_1}(e^{(\infty)}_{-\alpha_1}[1])^2
(e^{(\infty)}_{\alpha_1}[1])^2
-\frac{1}{(\mu_1+\mu_2)^3}(e^{(\infty)}_{-\alpha_1-\alpha_2}[1])^2
(e^{(\infty)}_{\alpha_1+\alpha_2}[1])^2
\\
&
+\frac{\gamma^2_1}{\mu^3_1}e^{(\infty)}_{-\alpha_1}[1]e^{(\infty)}_{\alpha_1}[1]
+\frac{(\gamma_1+\gamma_2)^2}{(\mu_1+\mu_2)^3}e^{(\infty)}_{-\alpha_1-\alpha_2}[1]e^{(\infty)}_{\alpha_1+\alpha_2}[1]
\\
&-\frac{1}{\mu_1(\mu_1+\mu_2)}\left(\frac{\gamma_1}{\mu_1}+\frac{\gamma_1+\gamma_2}{\mu_1+\mu_2}
\right)\left(e^{(\infty)}_{-\alpha_1}[1]e^{(\infty)}_{-\alpha_2}[1]e^{(\infty)}_{\alpha_1+\alpha_2}[1]
+e^{(\infty)}_{-\alpha_1-\alpha_2}[1]e^{(\infty)}_{\alpha_1}[1]e^{(\infty)}_{\alpha_2}[1]\right)\nonumber
\\
&+\frac{1}{\mu^2_1(\mu_1+\mu_2)}\left(e^{(\infty)}_{-\alpha_1}[1]e^{(\infty)}_{\alpha_1}[1]e^{(\infty)}_{-\alpha_2}[1]e^{(\infty)}_{\alpha_2}[1]
-e^{(\infty)}_{-\alpha_1}[1]e^{(\infty)}_{\alpha_1}[1]e^{(\infty)}_{-\alpha_1-\alpha_2}[1]e^{(\infty)}_{\alpha_1+\alpha_2}[1]\right)
\\
&+\frac{1}{\mu_1(\mu_1+\mu_2)^2}\left(e^{(\infty)}_{\alpha_2}[1]e^{(\infty)}_{-\alpha_2}[1]e^{(\infty)}_{-\alpha_1-\alpha_2}[1]e^{(\infty)}_{\alpha_1+\alpha_2}[1]
-e^{(\infty)}_{-\alpha_1-\alpha_2}[1]e^{(\infty)}_{\alpha_1+\alpha_2}[1]e^{(\infty)}_{-\alpha_1}[1]e^{(\infty)}_{\alpha_1}[1]\right). \nonumber
\end{align*}
\end{exmp}

\subsection{Confluent KZ equation}

Now we give the following differential equations, which we call a
confluent KZ equation:
\begin{align}
&\kappa\frac{\partial u}{\partial
z_i}=G^{(i)}_{-1}u\quad(i=1,\ldots,n),\label{CKZ1}\\
&\kappa\frac{\partial
u}{\partial\gamma_p}=\mathcal{H}^{(1)}_pu\quad(p=1,\ldots,N-1),\label{CKZ2}\\
&\kappa\frac{\partial
u}{\partial\mu_p}=\mathcal{H}^{(2)}_pu\quad(p=1,\ldots,N-1), \label{CKZ3}
\end{align}
where $\kappa\in\mathbb{C}$ and the unknown function
$u(z_1,\ldots,z_n,\gamma_1,\ldots,\gamma_{N-1},\mu_1,\ldots,\mu_{N-1})$ takes value in $V$ and
 the Gaudin Hamiltonians $G^{(i)}_{-1}$ are given by
\begin{align*}
G^{(i)}_{-1}=\sum\limits_{1\leq j\leq n,\atop j\neq
i}\frac{\Omega^{(i,j)}}{z_i-z_j}-\sum^{N-1}\limits_{p=1}\gamma_{p}w^{(i)}_{p}
-\sum_{\alpha\in\Delta} e^{(\infty)}_{\alpha}[1]e^{(i)}_{-\alpha}
-z_i\sum^{N-1}\limits_{p=1}\mu_{p}w^{(i)}_{p}\quad(i=1,\ldots,n).
\end{align*}

\begin{conj}
The confluent KZ equation \eqref{CKZ1}-\eqref{CKZ3} satisfies the compatibility condition, 
that is, we have 
\begin{align}
&\left[
\kappa\frac{\partial }{\partial z_i}-G^{(i)}_{-1}, 
\kappa\frac{\partial }{\partial z_j}-G^{(j)}_{-1}
\right]=0\quad (i,j=1,\ldots,n),
\\
&
\left[\kappa\frac{\partial }{\partial z_i}-G^{(i)}_{-1}, 
\kappa\frac{\partial}{\partial\gamma_p}-\mathcal{H}^{(1)}_p
\right]=0\quad (i=1,\ldots,n,\ p=1,\ldots, N-1),
\\
&\left[\kappa\frac{\partial }{\partial z_i}-G^{(i)}_{-1}, 
\kappa\frac{\partial}{\partial\mu_p}-\mathcal{H}^{(2)}_p
\right]=0\quad (i=1,\ldots,n,\ p=1,\ldots, N-1),
\\
&
\left[
\kappa\frac{\partial}{\partial\gamma_p}-\mathcal{H}^{(1)}_p, 
\kappa\frac{\partial}{\partial\gamma_q}-\mathcal{H}^{(1)}_q
\right]=0\quad (p,q=1,\ldots, N-1),
\\
&
\left[
\kappa\frac{\partial}{\partial\gamma_p}-\mathcal{H}^{(1)}_p,
\kappa\frac{\partial}{\partial\mu_q}-\mathcal{H}^{(2)}_q
\right]=0\quad (p,q=1,\ldots, N-1),
\\
&\left[
\kappa\frac{\partial}{\partial\mu_p}-\mathcal{H}^{(2)}_p,
\kappa\frac{\partial}{\partial\mu_q}-\mathcal{H}^{(2)}_q
\right]=0\quad (p,q=1,\ldots, N-1). 
\end{align}
\end{conj}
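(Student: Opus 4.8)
Label the deformation variables by $a\in\{z_1,\dots,z_n,\gamma_1,\dots,\gamma_{N-1},\mu_1,\dots,\mu_{N-1}\}$ and write $\mathcal{A}_a$ for the corresponding operator, so $\mathcal{A}_{z_i}=G^{(i)}_{-1}$, $\mathcal{A}_{\gamma_p}=\mathcal{H}^{(1)}_p$, $\mathcal{A}_{\mu_p}=\mathcal{H}^{(2)}_p$. The first step is to pass to the explicit realization of $V^{(\infty)}$. On $V^{(\infty)}$ one has $[e_\alpha[1],e_{-\alpha}[1]]=h_\alpha[2]=\mu_\alpha$ (a scalar) and $[e_\alpha[1],e_\beta[1]]=\epsilon(\alpha,\beta)e_{\alpha+\beta}[2]=0$ for $\beta\neq-\alpha$, so the $e_{\pm\alpha}[1]$ $(\alpha\in\Delta_+)$ generate a Heisenberg algebra with one oscillator per positive root and $V^{(\infty)}\cong\C[x_\alpha:\alpha\in\Delta_+]$ with $e_{-\alpha}[1]=\mu_\alpha^{1/2}x_\alpha$, $e_\alpha[1]=\mu_\alpha^{1/2}\partial_{x_\alpha}$, extending the $\sl_3$ picture of the Example to all $N$. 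In this realization the operators $\partial/\partial\gamma_k,\partial/\partial\mu_k$ on $V^{(\infty)}$ coincide with the ordinary partial derivatives in $\gamma_k,\mu_k$ (indeed $\partial_{\mu_k}(\mu_\alpha^{1/2}\partial_{x_\alpha})=\tfrac{1}{2\mu_\alpha}e_\alpha[1]$ for $\alpha\in J_k$ and $0$ otherwise, which is the defining rule), and every $\mathcal{A}_a$ becomes a polynomial differential operator in the $x_\alpha$ with coefficients in $U(\g)^{\otimes n}\otimes\C(z,\gamma,\mu)$. Expanding
\begin{equation*}
[\kappa\partial_a-\mathcal{A}_a,\ \kappa\partial_b-\mathcal{A}_b]=\kappa\,(\partial_b\mathcal{A}_a-\partial_a\mathcal{A}_b)+[\mathcal{A}_a,\mathcal{A}_b]
\end{equation*}
and using that $\kappa$ is a free parameter while the $\mathcal{A}_a$ do not involve $\kappa$, the Conjecture is equivalent to the two families of operator identities: \emph{(I)} $\partial_a\mathcal{A}_b=\partial_b\mathcal{A}_a$, and \emph{(II)} $[\mathcal{A}_a,\mathcal{A}_b]=0$, for all pairs $(a,b)$.

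Family (I) is the quantum form of the closedness of the Jimbo--Miwa--Ueno $1$-form: the $\mathcal{H}^{(k)}_p$ are the ``$w_p$-components'' of the normal-ordered $d\log\tau$ of \eqref{formu1}--\eqref{formu2}, and (I) says that this $1$-form remains closed after the chosen ordering. For pairs involving some $z_i$ it is nearly immediate: $\partial_{z_j}G^{(i)}_{-1}=\Omega^{(i,j)}/(z_i-z_j)^2=\partial_{z_i}G^{(j)}_{-1}$, $\partial_{\gamma_p}G^{(i)}_{-1}=-w^{(i)}_p=\partial_{z_i}\mathcal{H}^{(1)}_p$, and likewise $\partial_{\mu_p}G^{(i)}_{-1}=\partial_{z_i}\mathcal{H}^{(2)}_p$. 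For the pairs $(\gamma_p,\gamma_q)$, $(\gamma_p,\mu_q)$, $(\mu_p,\mu_q)$ one differentiates the explicit formulas for $\mathcal{H}^{(1)}_p,\mathcal{H}^{(2)}_p$ termwise, keeping in mind that $\partial/\partial\mu_k$ also hits the factors $\mu_\alpha^{1/2}$ inside the realized oscillators, and checks the required symmetry; the complement term of $\mathcal{H}^{(2)}_p$ is exactly what makes the $\mu$-direction symmetry hold (a spot check of $\partial_{\mu_1}\mathcal{H}^{(1)}_1=\partial_{\gamma_1}\mathcal{H}^{(2)}_1$ in the $\sl_3$ case of Example~\ref{ex hami} already displays the mechanism). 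I expect family (I) to be lengthy but essentially routine, paralleling \cite{JMU} and the Poincar\'e-rank-$1$ and $\sl_2$ treatments in \cite{FMTV}, \cite{JNS}.

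Family (II) carries the substance of the statement. The identities $[G^{(i)}_{-1},G^{(j)}_{-1}]=0$ and $[G^{(i)}_{-1},\mathcal{H}^{(1)}_p]=0$, $[G^{(i)}_{-1},\mathcal{H}^{(2)}_p]=0$ are of irregular-Gaudin type: assembling the $G^{(i)}_{-1}$ as residues of a single rational $\g$-valued current whose polar and irregular data are the operator matrices used to define the $\bar H^{(k)}_p$, they should follow from the infinitesimal braid relation $[\Omega^{(ij)},\Omega^{(ik)}+\Omega^{(jk)}]=0$ together with the $\g$-relations and the Heisenberg relations on $V^{(\infty)}$, as in \cite{FMTV} and \cite{FFT}. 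The genuinely hard identities are $[\mathcal{H}^{(1)}_p,\mathcal{H}^{(1)}_q]=0$, $[\mathcal{H}^{(1)}_p,\mathcal{H}^{(2)}_q]=0$ and, above all, $[\mathcal{H}^{(2)}_p,\mathcal{H}^{(2)}_q]=0$. My plan here would be: (a) realize everything in the Weyl algebra $\C[x_\alpha]$, so each commutator is a finite sum of monomials in $x_\alpha,\partial_{x_\alpha}$ weighted by rational functions of $\mu$ and polynomials in $\gamma$; (b) organize the cancellations by the set of oscillators $x_\gamma$ occurring in each monomial and by the degree in $\gamma$, reducing the vanishing to a bounded family of combinatorial identities among the structure constants $\epsilon(\alpha,\beta)$ and the linear order on $\Delta_+$ of $\sl_N$; (c) prove those identities, using the $\sl_3$ and $\sl_4$ cases --- checkable by computer algebra from Example~\ref{ex hami} --- to pin down the correct bookkeeping and to isolate the role of the complement term as the correction that kills the $\mu$-direction obstruction. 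A complementary route, at least for flatness, would be to establish in Section~4 that $\int_\Gamma\Phi^{1/\kappa}\omega$ solves \eqref{CKZ1}--\eqref{CKZ3} for every admissible cycle $\Gamma$ and that these solutions span the relevant weight spaces; then flatness of the connection is automatic.

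The main obstacle is step (b). Because $\mathcal{H}^{(2)}_p$ is cubic and quartic in the oscillators, its coefficients running over pairs and triples of roots in $J_p$ weighted by the sign functions $\epsilon,g$ and the denominators $\mu_\alpha\mu_\beta\mu_\gamma$, the commutator $[\mathcal{H}^{(2)}_p,\mathcal{H}^{(2)}_q]$ produces a very large list of oscillator monomials whose mutual cancellation is intricate, and --- unlike the Poincar\'e-rank-$1$ case or $\sl_2$ --- no generating function or single classical $r$-matrix forces it uniformly in $N$. Finding the right invariant organization of this cancellation is precisely what is missing, which is why the statement is phrased as a conjecture.
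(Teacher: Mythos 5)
The statement you are addressing is not proved in the paper at all: it is stated as a Conjecture, and the only support the authors give is the Remark that the commutativity of the Hamiltonians was checked by the computer algebra system SINGULAR/Plural for $N<7$ (the Appendix is precisely the program that realizes the operators in a noncommutative polynomial ring and evaluates the brackets). So there is no paper proof to measure your argument against, and your proposal, by your own admission, does not close the gap either: families (I) and (II) are a correct and sensible reduction, the Heisenberg/Weyl-algebra realization of $V^{(\infty)}$ with one oscillator per positive root is right (it is exactly the structure the paper's $\mathfrak{sl}_3$ example and the SINGULAR code exploit), and the easy pairs involving the $z_i$ are indeed nearly immediate; but steps (b)--(c) for $[\mathcal{H}^{(1)}_p,\mathcal{H}^{(2)}_q]=0$ and especially $[\mathcal{H}^{(2)}_p,\mathcal{H}^{(2)}_q]=0$ are a plan, not an argument, and they amount to the same thing the authors actually did --- verify small $N$ by machine --- without producing the uniform-in-$N$ cancellation scheme. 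That scheme is precisely the missing content of the conjecture, so your write-up should be read as an honest research program rather than a proof.

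Two further cautions about the parts you present as fallbacks. First, the complementary route via the integral formula does not yield flatness as stated: Theorem~\ref{thm solution} only produces solutions of \eqref{CKZ1}--\eqref{CKZ3} valued in the weight spaces $V_{\bf m}$ (and its proof is written out in the paper only for $n=0$, $N=3$); to conclude $[\kappa\partial_a-\mathcal{A}_a,\kappa\partial_b-\mathcal{A}_b]=0$ as an operator identity you would additionally need the integral solutions (for varying cycles $\Gamma$ and weights) to span each $V_{\bf m}$ generically, a completeness statement that is nowhere established and is likely as hard as the conjecture itself. Second, your separation into (I) and (II) via ``$\kappa$ is a free parameter'' is fine as a sufficient condition, but note the conjecture is stated for a fixed $\kappa$; proving (I) and (II) separately is what you want anyway, so simply present them as sufficient rather than equivalent. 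With these caveats, your assessment of where the difficulty lies --- organizing the cancellation among the cubic and quartic oscillator terms of $\mathcal{H}^{(2)}_p$, with the complement term as the crucial correction --- agrees with what the paper's computations suggest, but neither you nor the paper has a proof for general $N$.
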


\begin{re}
Using a software SINGULAR \cite{SINGULAR} (see Appendix), we can confirm that the above conjecture is true when $N$ is less than $7$. 
\end{re}
\section{Integral Formula}
We present integrable formulas taking values in $V_{\bf{m}}$.
Let $S_p$ be an index set $\{1,\ldots,m_p\}$ for each $p$ ($1\le p\le N-1$). We
prepare
the following integration variables
\begin{equation}
\left\{t_a^{(p)} | 1\le p\le N-1, \ a\in S_p\right\}.
\end{equation}
We define the master function of an integrable formula as follows.
\begin{align*}
\Phi(z,t)=&\prod\limits_{1\leq i<j\leq
n}(z_i-z_j)^{(\Lambda^{(i)},\Lambda^{(j)})}\prod\limits_{1\leq j\leq
n\atop 1\leq p\leq
N-1}\prod^{m_p}\limits_{a=1}(t^{(p)}_{a}-z_j)^{-(\alpha_p,\Lambda^{(j)})}
\nonumber
\\
&\times\prod\limits_{1\leq p<q\leq N-1}\prod\limits_{1\leq a\leq
m_p,\atop 1\leq b\leq
m_q}(t^{(p)}_a-t^{(q)}_b)^{(\alpha_p,\alpha_q)}\prod^{N-1}\limits_{p=1}
\prod\limits_{1\leq a<b\leq
m_p}(t^{(p)}_a-t^{(p)}_b)^{(\alpha_p,\alpha_p)}\nonumber
\\
&\times
\exp\left(-\sum^n\limits_{i=1}\left((\Lambda^{(\infty)}_1,\Lambda^{(i)})z_i+
(\Lambda^{(\infty)}_2,\Lambda^{(i)})\frac{z_i^2}{2}\right)\right)\nonumber
\\
&\times\exp\left(\sum^{N-1}\limits_{p=1}\sum^{m_p}\limits_{a=1}\left((\Lambda^{(\infty)}_1,\alpha_p)t^{(p)}_a
+(\Lambda^{(\infty)}_2,\alpha_p)\frac{\left(t^{(p)}_a\right)^2}{2}\right)\right).
\end{align*}

Next, we establish the $\omega$ part consisting of vectors $e_{-\alpha}^{(i)}$, $e_{-\alpha}^{(\infty)}[1]$
($i=1,\ldots,n$, $\alpha\in\Delta_+$).
For each $\alpha=\alpha_i+\cdots+\alpha_j\in\Delta_+$ and
$a=(a_i,\ldots,a_j)$ ($a_k\in S_k$, $i\le k\le j$),
we set
\begin{equation}
e_{-\alpha}(t^{\alpha}_a)=\frac{1}{(t^{(j)}_{a_j}-t^{(j-1)}_{a_{j-1}})\cdots(t^{(i+1)}_{a_{i+1}}-t^{(i)}_{a_i})}
\left(\sum^n_{s=1}
\frac{e_{-\alpha}^{(s)}}{t^{(i)}_{a_i}-z_s}-e^{(\infty)}_{-\alpha}[1]\right).
\end{equation}

Let $K=(k_{\alpha_1},k_{\alpha_1+\alpha_2},\ldots,k_{\alpha_{N-1}})
\in(\mathbb{Z}_{\ge 0})^{\frac{1}{2}N(N-1)}$ and
\begin{align}
S({\bf m})=\left\{K|\sum\limits_{\alpha\in J_p}k_{\alpha}=m_p\;
(p=1,\ldots,N-1)\right\}.
\end{align}
For $K\in S({\bf m})$, let
\begin{align*}
A(K)=(a(\alpha_1,1),\ldots,a(\alpha_1,k_{\alpha_1}),a(\alpha_1+\alpha_2,
1),
\ldots,a(\alpha_{N-1},k_{\alpha_{N-1}})),
\end{align*}
where
$a(\alpha,l)=(a_i(\alpha,l),\ldots,a_j(\alpha,l))$ ($\alpha=\alpha_i+\cdots+\alpha_j$,
$a_k(\alpha,l)\in S_k$, $i\le k\le j$) such that $a_p(\alpha,l)\neq a_p(\beta,l')$ for any
$(\alpha,l)\neq(\beta,l')$
($\alpha,\beta \in J_p$, $1\le l\le k_\alpha$ and $1\le l'\le k_\beta$)
and $\{a_p(\alpha,l)|\alpha\in J_p, 1\le l \le k_\alpha\}=S_p $.

We assign $a(\alpha,l)$ to $e_{-\alpha}
(t^{\alpha}_{a(\alpha,l)})$ and we set
\begin{equation}\label{form}
f_K=\sum_{A(K) }\prod e_{-\alpha}
(t^{\alpha}_{a(\alpha,l)}),
\end{equation}
where the summation is over all  $A(K)$  and
the order for the factor in each product is the same as  the
linear order $\alpha_1>\alpha_1+\alpha_2>\cdots>\alpha_{N-1}$.

We define $\omega_{{\bf{m}}}$ as
\begin{equation}
\omega_{{\bf{m}}}=\sum\limits_{K\in S(\bf{m})}f_K \pmb{v}.
\end{equation}



\begin{re}
If $N=2$, then  $\Phi(z,t)$ and $\omega_{\bf{m}}$ are exactly those
defined in \cite{JNS}. If $e_\alpha^{(\infty)}[1]$ ($\alpha\in\Delta$)
and $\Lambda_2^{(\infty)}$ are zero,   then $\omega_{\bf{m}}$ and
$\Phi(z,t)$ are equivalent to those defined in \cite{FMTV}.
\end{re}
\begin{exmp}\label{ex int}
We give an example for the case of $\sl_3$. Let ${\bf m}=(1,1)$ and
$n=0$, then, 
\begin{equation}
\Phi=\left(t^{(1)}-t^{(2)}\right)^{-1} \exp \left(
\gamma_1t^{(1)}+\mu_1\frac{\left(t^{(1)}\right)^2}{2} \right) \exp
\left( \gamma_2t^{(2)}+\mu_2\frac{\left(t^{(2)}\right)^2}{2}
\right),
\end{equation}
 and
 \begin{equation}
\omega_{\bf
m}=\left(e^{(\infty)}_{-\alpha_1}[1]e^{(\infty)}_{-\alpha_2}[1]
-\frac{e^{(\infty)}_{-\alpha_1-\alpha_2}[1]}{t^{(2)}-t^{(1)}}\right)\pmb{v}.
\end{equation}

\end{exmp}

\begin{thm}\label{thm solution}
With an appropriate choice of cycles $\Gamma$, the function
\begin{equation}\label{eq solution}
u=\int_{\Gamma}\prod\limits_{1\leq
p\leq N-1,\atop 1\leq a\leq
m_p}dt^{(p)}_a\Phi^{1/\kappa}(z,t)\omega_{\bf{m}}
\end{equation}
taking values in $V_{\bf{m}}$ is a solution to the confluent KZ
equation \eqref{CKZ1}-\eqref{CKZ3}.
\end{thm}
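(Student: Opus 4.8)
The plan is to verify that the integral \eqref{eq solution} satisfies each of the three families of equations \eqref{CKZ1}--\eqref{CKZ3} by differentiating under the integral sign and integrating by parts, following the strategy of \cite{FMTV} and \cite{JNS}. The key structural fact is that, because $\Phi^{1/\kappa}$ carries all the $z$-, $\gamma$-, $\mu$-dependence through the exponents and the exponentials, we have identities of the shape $\kappa\,\partial_{z_i}(\Phi^{1/\kappa}) = (\partial_{z_i}\log\Phi)\,\Phi^{1/\kappa}$, and similarly for $\partial_{\gamma_p}$ and $\partial_{\mu_p}$; the logarithmic derivatives $\partial_{z_i}\log\Phi$, $\partial_{\gamma_p}\log\Phi$, $\partial_{\mu_p}\log\Phi$ are explicit rational (resp. polynomial) functions of the integration variables. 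On the right-hand side one applies the operator ($G^{(i)}_{-1}$, $\mathcal H^{(1)}_p$, or $\mathcal H^{(2)}_p$) to the form $\Phi^{1/\kappa}\omega_{\bf m}$; since these operators act on the tensor factors of $V$, and the $\g$-action on each $V_i$ and the $\g_{(2)}$-action on $V^{(\infty)}$ are realized on the PBW-type form \eqref{form} in the manner of \cite{M}, the result is again a sum of terms of the form (scalar rational function of $z,t$)$\times\Phi^{1/\kappa}\times(\text{PBW monomial})\,\pmb v$. The heart of the proof is then to show that the difference of the two sides is a total $t$-derivative, i.e. equals $\sum_{p,a}\partial_{t^{(p)}_a}\!\big(\Phi^{1/\kappa}\,\eta^{(p)}_a\big)$ for suitable $V_{\bf m}$-valued forms $\eta^{(p)}_a$, so that it integrates to zero over the closed cycle $\Gamma$.

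I would organize the verification as follows. First, record the explicit logarithmic derivatives of $\Phi$: $\partial_{z_i}\log\Phi$ is a sum of simple-pole terms $\sum_{j\ne i}(\Lambda^{(i)},\Lambda^{(j)})/(z_i-z_j)$, pole terms $-\sum_{p,a}(\alpha_p,\Lambda^{(i)})/(t^{(p)}_a-z_i)$ coming from the $(t-z)$ factors, minus the $z$-linear exponential term $(\Lambda^{(\infty)}_1,\Lambda^{(i)}) + (\Lambda^{(\infty)}_2,\Lambda^{(i)})z_i$; while $\partial_{\gamma_p}\log\Phi = \sum_a t^{(p)}_a - \sum_i z_i$ acting with the appropriate weights, and $\partial_{\mu_p}\log\Phi = \tfrac12\sum_a (t^{(p)}_a)^2 - \tfrac12\sum_i z_i^2$ with weights. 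Second, compute $e^{(i)}_{-\alpha}$, $e^{(\infty)}_{-\alpha}[1]$, $h^{(i)}$, $h^{(\infty)}[0]$, and the raising operators $e^{(i)}_\alpha$, $e^{(\infty)}_\alpha[1]$ applied to $\omega_{\bf m}$; the lowering operators simply enlarge the PBW monomials by one factor $e_{-\alpha}(t^\alpha_a)$ (with a new integration variable, matching the weight bookkeeping $V_{\bf m}\to V_{\bf m+\alpha}$ built into $S({\bf m})$), whereas the raising operators produce, via the commutation relations and the explicit form of $h^{(\infty)}[0]$ and the differential operators $\partial/\partial\gamma_k$, $\partial/\partial\mu_k$ on $V^{(\infty)}$, contractions that cancel integration variables. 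Third, assemble the two sides of each equation and identify the discrepancy as $\sum_{p,a}\partial_{t^{(p)}_a}(\cdots)$; this is the combinatorial core and is where the specific shape of the Hamiltonians $\mathcal H^{(1)}_p$, $\mathcal H^{(2)}_p$ in Definition~\ref{ex hami}'s section — in particular the $1/\mu_\alpha$, $\gamma_\alpha/\mu_\alpha^2$, $\gamma_\alpha^2/\mu_\alpha^3$ coefficients and, crucially, the complement term $\sum_{\alpha,\beta\in J_p,\ \alpha-\beta\in\Delta_+}\mu_{\alpha-\beta}/(\mu_\alpha^2\mu_\beta)\,e^{(\infty)}_{-\alpha}[1]e^{(\infty)}_\alpha[1]$ — is forced: these coefficients are exactly what is needed so that the non-total-derivative remainder vanishes.

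The main obstacle is step three for the second-Poincaré-rank equation \eqref{CKZ3}: the Hamiltonian $\mathcal H^{(2)}_p$ contains quartic terms in the $e^{(\infty)}_{\pm\alpha}[1]$ and terms involving $H_\alpha$, $E_{-\alpha}E_\alpha$, and the products $E_\alpha e^{(\infty)}[1]e^{(\infty)}[1]$, so that acting on $\omega_{\bf m}$ generates a large number of terms with denominators up to $(t^{(2)}_p-t^{(2)}_k)^3$ and with the variables $t^{(1)}_p$, $(t^{(1)}_p)^2$ appearing explicitly; matching all of these against $\kappa\,\partial_{\mu_p}\log\Phi$ times $\omega_{\bf m}$ plus a total $t$-derivative requires a careful and lengthy bookkeeping of how the PBW ordering convention in \eqref{form} interacts with the commutators $[e_\alpha,e_{-\alpha}]=h_\alpha$ and $[e_\alpha,e_\beta]=\epsilon(\alpha,\beta)e_{\alpha+\beta}$, and with the action of $\partial/\partial\mu_k$ on the half-powers $\mu_\alpha^{1/2}$ hidden in the realization of $V^{(\infty)}$. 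I expect that, just as in \cite{JNS} for $\slt$, one can reduce the general case to a finite collection of ``local'' identities indexed by pairs or triples of roots in a single $J_p$, each of which is an elementary partial-fraction identity in the variables $t^{(p)}_a$; the compatibility of the whole system (the Conjecture in the excerpt, verified by SINGULAR for $N<7$) is what guarantees these local identities are mutually consistent, but for the present theorem only the identities themselves — not the full compatibility — are needed, since we are checking a given $u$ solves the system rather than that the system is integrable.
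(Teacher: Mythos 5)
Your proposal follows essentially the same route as the paper's proof: differentiate under the integral sign, use the vanishing of integrals of total $t$-derivatives over $\Gamma$ together with the invariance under $\mathfrak{S}_{m_1}\times\cdots\times\mathfrak{S}_{m_{N-1}}$ and repeated partial-fraction identities to reduce the left-hand side to a fixed family of basic integrals, reorder the Hamiltonian action on the PBW-type vectors $f_K\pmb{v}$ according to the linear order of roots, and match the two sides; the paper itself only outlines this and then carries out the case $n=0$, $N=3$ via the moment identities of Lemmas \ref{lem t} and \ref{lem t 2}, so your level of detail (deferring the combinatorial core) is comparable, and your closing remark that only these identities, not the full compatibility conjecture, are needed is also correct.

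One concrete correction to your setup, which as stated would make the verification of \eqref{CKZ3} fail: it is not true that $\Phi^{1/\kappa}$ carries all of the $\mu$-dependence. The confluent Verma module $V^{(\infty)}$ is equipped with a nontrivial action of $\partial/\partial\mu_k$ on its vectors, namely $\frac{\partial}{\partial\mu_k}\bigl(e_{\pm\alpha}[1]\bigr)=\frac{1}{2\mu_\alpha}\,e_{\pm\alpha}[1]$ for $\alpha\in J_k$ (this is the $\mu_\alpha^{1/2}$ normalization visible in the $\sl_3$ realization), so $\omega_{\bf m}$ is not $\mu$-constant and the left-hand side of \eqref{CKZ3} contains, besides the terms coming from $\partial_{\mu_p}\log\Phi$, the additional contribution $\kappa\int_\Gamma\Phi^{1/\kappa}\,\partial_{\mu_p}\omega_{\bf m}$. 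This is precisely the second term in \eqref{eq mu1}, and it is needed for the match: it is what cancels the $\kappa$-dependent pieces produced by the second-moment identities of Lemma \ref{lem t 2}, after which the remainder agrees with $\mathcal{H}^{(2)}_p u$ (including the complement term). You do mention the half-powers $\mu_\alpha^{1/2}$ at the end, but you file them under the Hamiltonian-side bookkeeping; strictly they enter the left-hand side of the equation. With that adjustment, your plan coincides with the paper's argument.
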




Let $\mathfrak{S}_{m_p}$ be the group of all permutations on the
variables $\{t_a^{(p)}|a\in S_p\}$, for $p=1,\ldots, N-1$. 
We  assume that for any rational function $\varphi$ whose poles are in
the diagonal set
 $D$ defined as 
 \begin{equation}
 D=\bigcup_{a,p\atop b,q}\{t_a^{(p)}=t_b^{(q)}\}\cup
 \bigcup_{a,p,i }\{t_a^{(p)}=z_i\}\cup
 \bigcup_{i,j}\{z_i=z_j\},
 \end{equation}
 an integral formula
\begin{equation}
\int_{\Gamma}\prod\limits_{1\leq
p\leq N-1,\atop 1\leq a\leq
m_p}dt^{(p)}_a\Phi^{1/\kappa}(z,t)\varphi
\end{equation}
is invariant under the action of $\sigma=(\sigma_1,\ldots,\sigma_{N-1})$
($\sigma_p\in\mathfrak{S}_{m_p}$, $1\leq p\leq N-1$).


We also assume that for any variable $t_b^{(q)}$,
\begin{equation}
\int_{\Gamma}\prod_{1\leq
p\leq N-1,\atop 1\leq a\leq
m_p}dt^{(p)}_a
\frac{\partial}{\partial t_b^{(q)}}\left(\Phi^{1/\kappa}(z,t)\varphi\right)=0.
\end{equation}

{\bf The outline of the proof of Theorem \ref{thm solution}.} The proof follows from direct computations.
Let us explain briefly outline of the computation of the proof.
First, we compute the left hand side of the confluent KZ equation \eqref{CKZ1},
\eqref{CKZ2} and \eqref{CKZ3}.
We rewrite the result after
taking the derivations on the integral formula
by the rational functions
appeared in the expression of $f_K$ for $K\in S(m)$, using an identity
\begin{equation}
\frac{1}{(x-y)(y-z)}+\frac{1}{(y-z)(z-x)}+\frac{1}{(z-x)(x-y)}=0
\end{equation}
repeatedly and the invariance under the action of
$\mathfrak{S}_{m_1}\times\cdots\times\mathfrak{S}_{m_{N-1}}$.
Second, we compute the right hand side of the confluent KZ equation \eqref{CKZ1},
\eqref{CKZ2} and \eqref{CKZ3}.
Note that the elements $e_{-\alpha}^{(i)}$ ($\alpha\in\Delta_+$) appeared in the expression of $f_K$ for $K\in S(m)$ are ordered by
the linear order defined in Definition 2.1. We rewrite the action of the Hamiltonians  on each $f_Kv$ according to the linear order. Third, we identify the parts of the left hand side with
the parts of the right hand side. 

\medskip

We compute the case of $n=0$ and $N=3$ only in this paper. In a similar way, we can compute 
the other cases. 

Before proceeding the computation, we prepare some notations. Let 
\begin{equation*}
\nabla^{(p)}_a=\frac{\partial}{\partial t_a^{(p)}}+\frac{1}{\kappa} \frac{\partial}{\partial t_a^{(p)}}
\left(\log(\Phi)\right) \quad (p=1,2,\ a=1,\ldots,m_p). 
\end{equation*}
 Because,  for $q=1,2$ and $b=1,\ldots,m_q$, 
 \begin{equation*}
\int_{\Gamma}\prod_{1\leq
p\leq 2,\atop 1\leq a\leq
m_p}dt^{(p)}_a
\frac{\partial}{\partial t_b^{(q)}}\left(\Phi^{\frac{1}{\kappa}}\varphi\right)=
\int_{\Gamma}\prod_{1\leq
p\leq 2,\atop 1\leq a\leq
m_p}dt^{(p)}_a \Phi^{\frac{1}{\kappa}}
\nabla^{(q)}_b(\varphi).
\end{equation*}
Let rational functions $\varphi_k$ ($k=1,\ldots, \min\{ m_1,m_2\}$) be defined as
\begin{equation*}
\varphi_k=\prod_{a=1}^k \frac{1}{t_a^{(1)}-t_a^{(2)}}, 
\end{equation*}
and $\varphi_0=1$. For a rational function $\varphi(t)$,   denote by $\langle \varphi(t) \rangle$ the integral formula
\begin{equation*}
\int_{\Gamma}\prod_{1\leq
p\leq 2,\atop 1\leq a\leq
m_p}dt^{(p)}_a \Phi^{\frac{1}{\kappa}}\varphi(t). 
\end{equation*}
Then, the function $u\in V_{m}$ ($ m=(m_1,m_2)$) \eqref{eq solution} 
is represented as 
\begin{align*}
u&=\int_{\Gamma}\prod_{1\leq
p\leq 2,\atop 1\leq a\leq
m_p}dt^{(p)}_a \Phi^{\frac{1}{\kappa}}\omega_m
\\
&=(-1)^{m_1+m_2}
\sum_{k=0}^{\min\{m_1,m_2\} }
\frac{m_1!m_2!}{(m_1-k)!(m_2-k)!k!}\langle\varphi_k \rangle(e_{-\alpha_1}^{(\infty)}[1])^{m_1-k}
(e_{-\alpha_1-\alpha_2}^{(\infty)}[1])^k(e_{-\alpha_2}^{(\infty)}[1])^{m_2-k}v. 
\end{align*}

{\bf The proof of Theorem \ref{thm solution} \eqref{CKZ2} in the case of $n=0$ and $N=3$.} 
The left hand side of \eqref{CKZ2} for $p=1$ is computed as 
\begin{equation}\label{eq gamma1}
\kappa \frac{\partial u}{\partial \gamma_1}=  \sum_{b=1}^{m_1}
\langle t_b^{(1) }\omega_m\rangle. 
\end{equation}
We rewrite the result \eqref{eq gamma1} in terms of $\langle\varphi_k\rangle$. 
By using Lemma \ref{lem t},  \eqref{eq gamma1} is computed as 
\begin{align}
\sum_{b=1}^{m_1}
\langle t_b^{(1) }\omega_m\rangle
&=(-1)^{m_1+m_2}\sum_{k=0}^{\min\{m_1,m_2\} }
\frac{m_1!m_2!}{(m_1-k)!(m_2-k)!k!}
\left\{\frac{k}{\mu_1+\mu_2}\left[-(\gamma_1+\gamma_2)\langle\varphi_k\rangle
+\mu_2\langle\varphi_{k-1}\rangle\right]\right.\nonumber 
\\
&+\left.\frac{m_1-k}{\mu_1}\left[(m_2-k)\langle\varphi_{k+1}\rangle
-\gamma_1\langle\varphi_{k}\rangle\right]\right\}
(e_{-\alpha_1}^{(\infty)}[1])^{m_1-k}
(e_{-\alpha_1-\alpha_2}^{(\infty)}[1])^k(e_{-\alpha_2}^{(\infty)}[1])^{m_2-k}v. \label{eq t omega}
\end{align}
On the other hand, the right hand side of \eqref{CKZ2} for $p=1$, 
$\mathcal{H}^{(1)}_{1}u$, is easily calculated and coincides with \eqref{eq t omega}. 
For the case of $p=2$, it can be verified in the similar manner. \qed 

\begin{lem}\label{lem t}
For $0\le k \le \min\{m_1,m_2\}$ and $1\le b\le k$, we have 
\begin{equation}\label{eq nabla 1 k}
(\mu_1+\mu_2)
\langle t_b^{(1)}\varphi_k\rangle=
-(\gamma_1+\gamma_2)\langle\varphi_k\rangle
+\mu_2\langle\varphi_{k-1}\rangle, 
\end{equation}
and
\begin{equation}\label{eq nabla 2 k}
(\mu_1+\mu_2)
\langle t_b^{(2)}\varphi_k\rangle=
-(\gamma_1+\gamma_2)\langle\varphi_k\rangle
-\mu_1\langle\varphi_{k-1}\rangle.
\end{equation}
For $0\le k \le \min\{m_1,m_2\}$ and $k+1\le b\le m_1$, we have 
\begin{equation}\label{eq nabla 1 m_1}
\mu_1
\langle t_b^{(1)}\varphi_k\rangle=
(m_2-k)\langle\varphi_{k+1}\rangle
-\gamma_1\langle\varphi_{k}\rangle.
\end{equation}
For $0\le k \le \min\{m_1,m_2\}$ and $k+1\le b\le m_2$, we have 
\begin{equation}\label{eq nabla 2 m_2}
\mu_2
\langle t_b^{(2)}\varphi_k\rangle=
-(m_1-k)\langle\varphi_{k+1}\rangle
-\gamma_2\langle\varphi_{k}\rangle.
\end{equation}
\end{lem}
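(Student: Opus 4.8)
The plan is to derive all four identities from a single source: integration by parts against $\partial/\partial t_b^{(p)}$, together with the explicit form of $\log\Phi$. Recall that by the assumed vanishing of the boundary terms, $\langle \nabla_b^{(q)}(\psi)\rangle = 0$ for every rational $\psi$ and every choice of $(q,b)$. So the first step is to compute $\partial(\log\Phi)/\partial t_b^{(p)}$ for our specialized $\Phi$ (the $n=0$, $N=3$ master function of Example \ref{ex int}, adapted to general ${\bf m}$): this gradient splits into a polynomial part coming from $\exp(\gamma_p t_a^{(p)} + \mu_p (t_a^{(p)})^2/2)$, namely $\gamma_p + \mu_p t_b^{(p)}$, plus simple-pole terms $\sum_{a\neq b}(\alpha_p,\alpha_p)/(t_b^{(p)}-t_a^{(p)})$ and $\sum_{a}(\alpha_1,\alpha_2)/(t_b^{(1)}-t_a^{(2)})$ coming from the Vandermonde-type factors, with $(\alpha_p,\alpha_p)=2$ and $(\alpha_1,\alpha_2)=-1$ for $\sl_3$.

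Second, I would apply $\langle\nabla_b^{(1)}(\varphi_k)\rangle = 0$ in the two regimes separately. When $1\le b\le k$, the factor $1/(t_b^{(1)}-t_b^{(2)})$ is present in $\varphi_k$, so $\partial\varphi_k/\partial t_b^{(1)} = -\varphi_k/(t_b^{(1)}-t_b^{(2)})$, which combines with the $(\alpha_1,\alpha_2)=-1$ cross-term $-\varphi_k/(t_b^{(1)}-t_b^{(2)})$ (the $a=b$ summand) to produce something proportional to $\varphi_{k-1}$ after multiplying through by the appropriate difference — this is where $\mu_2\langle\varphi_{k-1}\rangle$ on the right of \eqref{eq nabla 1 k} comes from. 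One then has to check that the remaining pole terms (the $a\neq b$ contributions in both the $t^{(1)}$-block and the $t^{(2)}$-block) cancel in pairs after symmetrizing using $\mathfrak{S}_{m_1}\times\mathfrak{S}_{m_2}$-invariance and the three-term identity $\tfrac{1}{(x-y)(y-z)}+\tfrac{1}{(y-z)(z-x)}+\tfrac{1}{(z-x)(x-y)}=0$; once they do, what survives is $\mu_1\langle t_b^{(1)}\varphi_k\rangle + (\text{cross term}) + \gamma_1\langle\varphi_k\rangle = 0$, and rearranging (combined with the analogous relation where the cross-term is re-expressed via $t_b^{(2)}$) gives \eqref{eq nabla 1 k}. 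The companion identity \eqref{eq nabla 2 k} comes identically from $\langle\nabla_b^{(2)}(\varphi_k)\rangle=0$ with the roles of $t^{(1)}$ and $t^{(2)}$ interchanged, yielding the sign flip $-\mu_1\langle\varphi_{k-1}\rangle$; adding the two relations in this regime lets one eliminate $\langle t_b^{(1)}\varphi_k\rangle + \langle t_b^{(2)}\varphi_k\rangle$ consistently and confirms both. When instead $k+1\le b\le m_1$, the variable $t_b^{(1)}$ does not occur in any denominator of $\varphi_k$, so $\partial\varphi_k/\partial t_b^{(1)}=0$ and $\langle\nabla_b^{(1)}(\varphi_k)\rangle=0$ reads $\mu_1\langle t_b^{(1)}\varphi_k\rangle + \gamma_1\langle\varphi_k\rangle + \langle(\text{pole terms})\varphi_k\rangle = 0$; here the pole term $\sum_{a}(\alpha_1,\alpha_2)/(t_b^{(1)}-t_a^{(2)}) \cdot \varphi_k = -\sum_a \varphi_k/(t_b^{(1)}-t_a^{(2)})$ must be shown, after $\mathfrak{S}_{m_2}$-symmetrization, to collapse to $-(m_2-k)\langle\varphi_{k+1}\rangle$ — the point being that for $a\le k$ the product $\varphi_k/(t_b^{(1)}-t_a^{(2)})$ rearranges (via the three-term identity applied to $t_a^{(1)},t_a^{(2)},t_b^{(1)}$) into terms that symmetrize to zero or to $\varphi_{k+1}$-type terms, while the $a$ ranging over the "free" indices $k+1,\dots,m_2$ each contribute one $\varphi_{k+1}$ after relabeling, giving the count $m_2-k$. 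The $t^{(1)}\!\leftrightarrow\! t^{(2)}$ mirror yields \eqref{eq nabla 2 m_2} with the sign $-(m_1-k)$, and similarly the extra $\mathbb{Z}^2$-worth of Vandermonde powers contributes nothing because $(\alpha_p,\alpha_p)$-pole terms are antisymmetric under swapping $t_a^{(p)}\leftrightarrow t_b^{(p)}$ and hence die under symmetrization.

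The main obstacle is the bookkeeping in the second regime: showing that the sum over $a$ of the cross-pole terms $\varphi_k/(t_b^{(1)}-t_a^{(2)})$ reorganizes cleanly into exactly $(m_2-k)$ copies of $\langle\varphi_{k+1}\rangle$ after symmetrization. This requires carefully pairing each index $a\in\{1,\dots,k\}$ with the partial-fraction decomposition of $1/((t_a^{(1)}-t_a^{(2)})(t_b^{(1)}-t_a^{(2)}))$ in the variable $t_a^{(2)}$, tracking which resulting monomials are $\mathfrak{S}_{m_1}$- or $\mathfrak{S}_{m_2}$-antisymmetric (and thus integrate to zero), and noticing that the "new" factor $1/(t_b^{(1)}-t_b^{(2)})$-type term one generates is, up to relabeling $b\mapsto k+1$, precisely $\varphi_{k+1}$; the constant $\tfrac{m_1!m_2!}{(m_1-k)!(m_2-k)!k!}$ multiplicities appearing later in \eqref{eq t omega} are exactly what makes these relabelings consistent. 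Everything else — the polynomial parts, the sign conventions from $(\alpha_1,\alpha_2)=-1$, and the $\mathfrak{S}$-invariance reductions — is routine once the master function's logarithmic derivative is written out explicitly.
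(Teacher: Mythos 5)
Your overall framework (integration by parts in the form $\langle\nabla^{(q)}_b(\cdot)\rangle=0$, the explicit logarithmic derivative of $\Phi$, symmetrization under $\mathfrak{S}_{m_1}\times\mathfrak{S}_{m_2}$, and the three-term identity) is the same as the paper's, and your sketch for the regime $k+1\le b\le m_1$ is close to the computation that works. However, your argument in the crucial regime $1\le b\le k$ has a genuine gap. The single relation $\langle\kappa\nabla^{(1)}_b(\varphi_k)\rangle=0$ does not yield \eqref{eq nabla 1 k}: since $\kappa\nabla^{(1)}_b=\kappa\,\partial/\partial t^{(1)}_b+\partial\log\Phi/\partial t^{(1)}_b$, the derivative of $\varphi_k$ and the $c=b$ cross pole combine to $-(\kappa+1)\bigl\langle\varphi_k/(t^{(1)}_b-t^{(2)}_b)\bigr\rangle$, a $\kappa$-dependent term involving a function with the squared factor $(t^{(1)}_b-t^{(2)}_b)^{-2}$, which lies outside the family $\{\varphi_j\}$ and cannot be eliminated from that one equation; note that \eqref{eq nabla 1 k} is $\kappa$-free. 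Moreover, ``multiplying through by the appropriate difference'' is not a legitimate step: the bracket is an integral, so you cannot multiply the integrand of an already-derived identity by $(t^{(1)}_b-t^{(2)}_b)$. The mechanism that actually works (and is the paper's) is to evaluate the combination $\langle\kappa(\nabla^{(1)}_b+\nabla^{(2)}_b)\varphi_k\rangle=0$: because $\varphi_k$ depends on $t^{(1)}_b,t^{(2)}_b$ only through their difference, the $\kappa\,\partial$ terms cancel identically, as do the $c=b$ cross poles; the remaining pole contributions ($X_1+\cdots+X_4$) vanish after symmetrization; and the term $\mu_2\langle\varphi_{k-1}\rangle$ arises not from the derivative term but from writing $\mu_2t^{(2)}_b=\mu_2t^{(1)}_b-\mu_2(t^{(1)}_b-t^{(2)}_b)$ and using $\langle(t^{(1)}_b-t^{(2)}_b)\varphi_k\rangle=\langle\varphi_{k-1}\rangle$ after relabeling. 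You do mention adding the two relations, but only as a consistency check at the end, and your stated source of the $\varphi_{k-1}$ term is incorrect, so as written \eqref{eq nabla 1 k} and \eqref{eq nabla 2 k} are not established.

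A smaller inaccuracy occurs in the second regime: your claim that the same-block pole terms $2/(t^{(1)}_b-t^{(1)}_c)$ die by antisymmetry under symmetrization is true only when both indices are free ($c>k$); for $c\le k$ the swap also moves $\varphi_k$, and these terms are exactly the ones needed, via the three-term identity, to cancel the cross poles $1/(t^{(1)}_b-t^{(2)}_c)$ with $c\le k$. If they were all discarded, the surviving cross terms would not reduce to the stated $(m_2-k)\langle\varphi_{k+1}\rangle$; the relabeling of the $c>k$ cross terms to $(m_2-k)\langle\varphi_{k+1}\rangle$ itself is correct.
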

\begin{proof}
In order to prove \eqref{eq nabla 1 k}, we compute 
$\langle\kappa\left(\nabla_b^{(1)} + \nabla_b^{(2)}\right)\varphi_k \rangle$ as follows. 
From the definition, we have 
\begin{align*}
\left\langle\kappa\left(\nabla_b^{(1)} + \nabla_b^{(2)}\right)\varphi_k \right\rangle
=&\left\langle\left(
\sum_{c=1, c\neq b}^{m_1}\frac{2}{t_b^{(1)}-t_c^{(1)}}
+\sum_{c=1}^{m_2}\frac{-1}{t_b^{(1)}-t_c^{(2)}}
+\gamma_1+\mu_1t_b^{(1)}
\right.\right.
\\
&\left.\left.+\sum_{c=1, c\neq b}^{m_2}\frac{2}{t_b^{(2)}-t_c^{(2)}}
+\sum_{c=1}^{m_1}\frac{-1}{t_b^{(2)}-t_c^{(1)}}
+\gamma_2+\mu_2t_b^{(2)}
\right)\varphi_k\right\rangle=0. 
\end{align*}
Let $X_i$ ($i=1,2,3,4$) be defined as 
\begin{align*}
&X_1=\left\langle \sum_{c=1, c\neq b}^{m_1}\frac{2}{t_b^{(1)}-t_c^{(1)}}
\varphi_k\right\rangle,\quad 
X_2=\left\langle \sum_{c=1, c\neq b}^{m_2}\frac{-1}{t_b^{(1)}-t_c^{(2)}}
\varphi_k\right\rangle,
\\
&X_3=\left\langle \sum_{c=1, c\neq b}^{m_2}\frac{2}{t_b^{(2)}-t_c^{(2)}}
\varphi_k\right\rangle,\quad 
X_4=\left\langle \sum_{c=1, c\neq b}^{m_1}\frac{-1}{t_b^{(2)}-t_c^{(1)}}
\varphi_k\right\rangle. 
\end{align*}
By the invariance under the action of $\frak{S}_{m_1}\times \frak{S}_{m_2}$, we have 
\begin{equation*}
\frac{X_1}{2}=\left\langle \sum_{c=1,\atop c\neq b}^{k}\frac{-1}{t_b^{(1)}-t_c^{(1)}}
\varphi_k\right\rangle
+\left\langle \sum_{c=k+1}^{m_1}\frac{-1}{t_b^{(1)}-t_c^{(1)}}
\frac{1}{t_c^{(1)}-t_b^{(2)}}
\prod_{a=1, a\neq b}^{k}\frac{1}{t_a^{(1)}-t_a^{(2)}}\right\rangle. 
\end{equation*}
Hence, we obtain
\begin{equation*}
X_1+X_4
=
\left\langle \sum_{c=1, c\neq b}^{k}\frac{-1}{t_b^{(2)}-t_c^{(1)}}
\varphi_k\right\rangle.
\end{equation*}
In the similar way, we obtain 
\begin{equation*}
X_2+X_3=\left\langle \sum_{c=1, c\neq b}^{k}\frac{1}{t_b^{(1)}-t_c^{(2)}}
\varphi_k\right\rangle.
\end{equation*}
Consequently, by the invariance under the action of $\frak{S}_{m_1}\times \frak{S}_{m_2}$, 
we have 
\begin{equation*}
X_1+X_2+X_3+X_4=0. 
\end{equation*}
Therefore, we obtain 
\begin{equation*}
\langle\kappa\left(\nabla_b^{(1)} + \nabla_b^{(2)}\right)\varphi_k \rangle=
(\gamma_1+\gamma_2)\langle\varphi_k\rangle
+(\mu_1+\mu_2)
\langle t_b^{(1)}\varphi_k\rangle
-\mu_2\langle\varphi_{k-1}\rangle=0,  
\end{equation*}
which finishes the proof for the relation \eqref{eq nabla 1 k}. 

The other relations \eqref{eq nabla 2 k}, \eqref{eq nabla 1 m_1}, 
and \eqref{eq nabla 2 m_2} 
can be verified in the similar manner,  by computing 
$\langle\kappa\left(\nabla_b^{(1)} + \nabla_b^{(2)}\right)\varphi_k \rangle$, 
$\langle\kappa\nabla_b^{(1)} \varphi_k \rangle$, 
and $\langle\kappa\nabla_b^{(2)} \varphi_k \rangle$, respectively.
\end{proof}

{\bf The proof of Theorem \ref{thm solution} \eqref{CKZ3} in the case of $n=0$ and $N=3$.} 
Because, for $k=0,1,\ldots, \min\{m_1,m_2\}$,   
\begin{align*}
&\frac{\partial}{\partial \mu_1} \left(
(e_{-\alpha_1}^{(\infty)}[1])^{m_1-k}
(e_{-\alpha_1-\alpha_2}^{(\infty)}[1])^k(e_{-\alpha_2}^{(\infty)}[1])^{m_2-k}v
\right)
\\
&=\frac{1}{2}\left(
\frac{m_1-k}{\mu_1}+\frac{k}{\mu_1+\mu_2}
\right)
\left(
(e_{-\alpha_1}^{(\infty)}[1])^{m_1-k}
(e_{-\alpha_1-\alpha_2}^{(\infty)}[1])^k(e_{-\alpha_2}^{(\infty)}[1])^{m_2-k}v\right), 
\end{align*}
the left hand side of \eqref{CKZ3} for $p=1$ is computed as 
\begin{align}
\kappa \frac{\partial u}{\partial \mu_1}=&  \sum_{b=1}^{m_1}
\left\langle \frac{\left(t_b^{(1) }\right)^2}{2}\omega_m\right\rangle
+(-1)^{m_1+m_2}\sum_{k=0}^{\min\{m_1,m_2\} }
\frac{m_1!m_2!}{2(m_1-k)!(m_2-k)!k!}\nonumber 
\\
&\times
\left(
\frac{m_1-k}{\mu_1}+\frac{k}{\mu_1+\mu_2}
\right)\langle \varphi_k\rangle
(e_{-\alpha_1}^{(\infty)}[1])^{m_1-k}
(e_{-\alpha_1-\alpha_2}^{(\infty)}[1])^k(e_{-\alpha_2}^{(\infty)}[1])^{m_2-k}v. \label{eq mu1}
\end{align}
We rewrite the result \eqref{eq mu1} in terms of $\langle\varphi_k\rangle$. 
By using  Lemma \ref{lem t 2}, \eqref{eq mu1} is computed as 
\begin{align}
\kappa \frac{\partial u}{\partial \mu_1}=&
\frac{(-1)^{m_1+m_2}}{2}\sum_{k=0}^{\min\{m_1,m_2\} }
\frac{m_1!m_2!}{(m_1-k)!(m_2-k)!k!}\nonumber
\\
&\times 
\left(
\frac{k}{\mu_1+\mu_2}
\left(
\left(
-(m_1-1)+\frac{\mu_2}{\mu_1}(m_2-k+1)+\frac{(\gamma_1+\gamma_2)^2}{\mu_1+\mu_2}
\right)\langle \varphi_k \rangle
\right.\right.\nonumber
\\
&\left.-\mu_2\left(
\frac{\gamma_1}{\mu_1}+\frac{\gamma_1+\gamma_2}{\mu_1+\mu_2}
\right)\langle \varphi_{k-1}\rangle
\right)\nonumber
\\
&+\frac{m_1-k}{\mu_1}\left(
-(m_2-k)\left(
\frac{\gamma_1}{\mu_1}+\frac{\gamma_1+\gamma_2}{\mu_1+\mu_2}
\right)\langle \varphi_{k+1}\rangle
\right.\nonumber
\\
&\left.+\left(
(m_2-k)\frac{\mu_2}{\mu_1+\mu_2}+\frac{\gamma_1^2}{\mu_1}-(m_1-1)
\right)\langle\varphi_k\rangle\right)\nonumber
\\
&\times (e_{-\alpha_1}^{(\infty)}[1])^{m_1-k}
(e_{-\alpha_1-\alpha_2}^{(\infty)}[1])^k(e_{-\alpha_2}^{(\infty)}[1])^{m_2-k}v. \label{eq u_1}
\end{align}

On the other hand, the computation of the right hand side of \eqref{CKZ3} is straightforward and we see that as a result, 
the right hand side of \eqref{CKZ3} is equal to \eqref{eq u_1}. The case of $p=2$ can be proved in a similar way. 
\qed 
\begin{lem}\label{lem t 2}
For $0\le k \le \min\{m_1,m_2\}$ and $1\le b\le k$, we have
\begin{align}
(\mu_1+\mu_2)\left\langle \left(t_b^{(1)}\right)^2 \varphi_k\right \rangle 
=&\left(
-m_1+1-\kappa+\frac{\mu_2}{\mu_1}(m_2-k+1)+\frac{(\gamma_1+\gamma_2)^2}{\mu_1+\mu_2}
\right)\langle\varphi_k\rangle \nonumber 
\\
&-\mu_2\left(
\frac{\gamma_1}{\mu_1}+\frac{\gamma_1+\gamma_2}{\mu_1+\mu_2}
\right)\langle \varphi_{k-1}\rangle, \label{eq nabla t^2 1 k}
\end{align}
and 
\begin{align}
(\mu_1+\mu_2)\left\langle \left(t_b^{(2)}\right)^2 \varphi_k \right\rangle 
=&\left(
-m_2+1-\kappa+\frac{\mu_1}{\mu_2}(m_1-k+1)+\frac{(\gamma_1+\gamma_2)^2}{\mu_1+\mu_2}
\right)\langle\varphi_k\rangle\nonumber 
\\
&+\mu_1\left(
\frac{\gamma_2}{\mu_2}+\frac{\gamma_1+\gamma_2}{\mu_1+\mu_2}
\right)\langle \varphi_{k-1}\rangle. \label{eq nabla t^2 2 k}
\end{align}
For $0\le k \le \min\{m_1,m_2\}$ and $k+1\le b\le m_1$, we have 
\begin{align}
\mu_1\left\langle \left(t_b^{(1)}\right)^2 \varphi_k \right\rangle=&
-(m_2-k)\left(
\frac{\gamma_1}{\mu_1}+\frac{\gamma_1+\gamma_2}{\mu_1+\mu_2}
\right)\langle \varphi_{k+1}\rangle\nonumber
\\
&+\left(
(m_2-k)\frac{\mu_2}{\mu_1+\mu_2}+\frac{\gamma_1^2}{\mu_1}-m_1+1-\kappa
\right)\langle\varphi_k\rangle.\label{eq nabla t^2 1 m_1}
\end{align}
For $0\le k \le \min\{m_1,m_2\}$ and $k+1\le b\le m_2$, we have 
\begin{align}
\mu_2\left\langle \left(t_b^{(2)}\right)^2 \varphi_k \right\rangle=&
(m_1-k)\left(
\frac{\gamma_2}{\mu_2}+\frac{\gamma_1+\gamma_2}{\mu_1+\mu_2}
\right)\langle \varphi_{k+1}\rangle\nonumber
\\
&+\left(
(m_1-k)\frac{\mu_1}{\mu_1+\mu_2}+\frac{\gamma_2^2}{\mu_2}-m_2+1-\kappa
\right)\langle\varphi_k\rangle.\label{eq nabla t^2 2 m_2}
\end{align}
\end{lem}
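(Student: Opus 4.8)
The plan is to mimic exactly the argument that proved Lemma \ref{lem t}, but now applied to second-order moments $\left(t_b^{(p)}\right)^2$ instead of first-order moments $t_b^{(p)}$. The starting point is the integration-by-parts identity: for any rational function $\psi$ whose poles lie on $D$,
\begin{equation*}
\left\langle \kappa\, \nabla_b^{(q)} \psi \right\rangle = 0,
\end{equation*}
and more generally $\left\langle \kappa\left(\nabla_b^{(1)}+\nabla_b^{(2)}\right)\psi\right\rangle=0$. To obtain \eqref{eq nabla t^2 1 k} I would apply this with $\psi = t_b^{(1)}\varphi_k$. The key observation is that $\frac{\partial}{\partial t_b^{(1)}}\left(\log \Phi\right)$ contributes a term $\mu_1 t_b^{(1)}$, so $\nabla_b^{(1)}\left(t_b^{(1)}\varphi_k\right)$ produces $\frac{\mu_1}{\kappa}\left(t_b^{(1)}\right)^2\varphi_k$ together with lower-order terms; after pairing with the $\nabla_b^{(2)}$ piece (which brings in $\mu_2\left(t_b^{(1)}\right)^2\varphi_k$ after using symmetrization, exactly as in the proof of \eqref{eq nabla 1 k}) one gets the coefficient $(\mu_1+\mu_2)$ on the left. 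The explicit $-\kappa$ appearing in the bracket of \eqref{eq nabla t^2 1 k}, absent in Lemma \ref{lem t}, comes precisely from the extra term $\kappa\frac{\partial}{\partial t_b^{(1)}}\left(t_b^{(1)}\varphi_k\right) = \kappa\varphi_k + \kappa t_b^{(1)}\frac{\partial\varphi_k}{\partial t_b^{(1)}}$; the first summand is the source of $-\kappa\langle\varphi_k\rangle$.

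Concretely, the steps are: (i) expand $\nabla_b^{(1)}+\nabla_b^{(2)}$ applied to $t_b^{(1)}\varphi_k$, separating out the $\mu$-terms, the $\gamma$-terms, the $\frac{\partial}{\partial t}\log\Phi$ two-body terms, and the bare $\kappa\frac{\partial}{\partial t}$ terms; (ii) among the two-body sums $\sum 1/(t_b^{(1)}-t_c^{(\cdot)})$ times $t_b^{(1)}\varphi_k$, use the same cancellation pattern as in Lemma \ref{lem t}, reorganizing via the identity $\frac{1}{(x-y)(y-z)}+\text{cyc}=0$ and the $\mathfrak{S}_{m_1}\times\mathfrak{S}_{m_2}$-invariance of $\langle\,\cdot\,\rangle$, so that all genuinely new contributions reduce to multiples of $\langle t_b^{(1)}\varphi_k\rangle$, $\langle t_b^{(1)}\varphi_{k-1}\rangle$, $\langle\varphi_k\rangle$, $\langle\varphi_{k-1}\rangle$; (iii) substitute the already-proven first-moment formulas \eqref{eq nabla 1 k} and \eqref{eq nabla 2 k} from Lemma \ref{lem t} to express $\langle t_b^{(1)}\varphi_{k-1}\rangle$ and any leftover first-moment terms back in terms of $\langle\varphi_k\rangle,\langle\varphi_{k-1}\rangle$; (iv) solve the resulting linear relation for $\langle\left(t_b^{(1)}\right)^2\varphi_k\rangle$. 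The relations \eqref{eq nabla t^2 2 k}, \eqref{eq nabla t^2 1 m_1}, \eqref{eq nabla t^2 2 m_2} are handled identically: for the ``$b\le k$'' cases use $\kappa(\nabla_b^{(1)}+\nabla_b^{(2)})$ applied to $t_b^{(q)}\varphi_k$, while for the ``$b>k$'' cases (where $t_b^{(1)}$ or $t_b^{(2)}$ does not appear in $\varphi_k$) use the single operator $\kappa\nabla_b^{(q)}$ applied to $t_b^{(q)}\varphi_k$, paralleling the derivations of \eqref{eq nabla 1 m_1} and \eqref{eq nabla 2 m_2}.

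The main obstacle will be the bookkeeping in step (ii): after multiplying the two-body kernels by the extra factor $t_b^{(q)}$, the telescoping/cancellation identities from Lemma \ref{lem t} no longer apply verbatim, because $t_b^{(q)}$ breaks the symmetry between $t_b^{(q)}$ and $t_c^{(q)}$ in each summand. The trick is to write $t_b^{(q)} = \tfrac12\bigl((t_b^{(q)}+t_c^{(q)}) + (t_b^{(q)}-t_c^{(q)})\bigr)$ (or, for mixed kernels, a similar decomposition), so that the symmetric part $(t_b^{(q)}+t_c^{(q)})/(t_b^{(q)}-t_c^{(q)})$ combines with the $\mathfrak{S}_{m_q}$-invariance and the antisymmetric part $(t_b^{(q)}-t_c^{(q)})/(t_b^{(q)}-t_c^{(q)})=1$ just counts terms; this is precisely the mechanism generating the combinatorial factors $m_1-1$, $m_2-k+1$, etc., in the stated formulas. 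Once this reduction is done, everything else is linear algebra in the $\langle\varphi_k\rangle$'s, and matching the final coefficients against the claimed expressions is routine.
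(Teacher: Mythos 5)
Your proposal follows essentially the same route as the paper: the paper likewise applies the integration-by-parts identity to first-moment test functions (it computes $\langle\kappa(\nabla_b^{(1)}t_b^{(1)}+\nabla_b^{(2)}t_b^{(2)})\varphi_k\rangle$, resp.\ $\langle\kappa\nabla_b^{(q)}t_b^{(q)}\varphi_k\rangle$, which vanish), reduces the two-body sums $X_1,\dots,X_4$ via the $\mathfrak{S}_{m_1}\times\mathfrak{S}_{m_2}$-invariance to a multiple of $\langle\varphi_k\rangle$, and then substitutes Lemma \ref{lem t} for the leftover first moments before solving for the second moment. Your only deviation is the exact choice of test function, $(\nabla_b^{(1)}+\nabla_b^{(2)})(t_b^{(1)}\varphi_k)$ instead of the paper's diagonal combination applied to $\varphi_k$, which is immaterial.
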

\begin{proof}
In order to prove \eqref{eq nabla t^2 1 k}, we compute $\langle \kappa (\nabla_b^{(1)}
t_b^{(1)}+\nabla_b^{(2)}
t_b^{(2)})\varphi_k\rangle$ as follows. 
Let $X_i$ be defined as 
\begin{align*}
&X_1=\left\langle \sum_{c=1, c\neq b}^{m_1}\frac{2t_b^{(1)}}{t_b^{(1)}-t_c^{(1)}}
\varphi_k\right\rangle,\quad 
X_2=\left\langle \sum_{c=1, c\neq b}^{m_2}\frac{-t_b^{(1)}}{t_b^{(1)}-t_c^{(2)}}
\varphi_k\right\rangle,
\\
&X_3=\left\langle \sum_{c=1, c\neq b}^{m_2}\frac{2t_b^{(2)}}{t_b^{(2)}-t_c^{(2)}}
\varphi_k\right\rangle,\quad 
X_4=\left\langle \sum_{c=1, c\neq b}^{m_1}\frac{-t_b^{(2)}}{t_b^{(2)}-t_c^{(1)}}
\varphi_k\right\rangle. 
\end{align*}
Then, we have
\begin{align}
\left\langle \kappa \left(\nabla_b^{(1)}
t_b^{(1)}+\nabla_b^{(2)}
t_b^{(2)}
\right)\varphi_k\right\rangle=&X_1+X_2+X_3+X_4+\langle\gamma_1 t_b^{(1)}\varphi_k\rangle
+\langle\gamma_2 t_b^{(2)}\varphi_k\rangle\nonumber
\\
&+(\mu_1+\mu_2)\left\langle (t_b^{(1)})^2 \varphi_k\right\rangle
+\left\langle (t_k^{(1)}+t_k^{(2)}) \varphi_{k-1}\right\rangle+
\kappa\langle \varphi_k \rangle.\label{eq nabla for t 2}
\end{align}
We need to compute  $\sum_{i=1}^4X_i$ only,  due to Lemma \ref{lem t}. 
By the invariance under the action of $\frak{S}_{m_1}\times \frak{S}_{m_2}$, we have 
\begin{equation*}
\frac{X_1}{2}=\left\langle \sum_{c=1,\atop c\neq b}^{k}\frac{-t_c^{(1)}}{t_b^{(1)}-t_c^{(1)}}
\varphi_k\right\rangle
+\left\langle \sum_{c=k+1}^{m_1}\frac{-t_c^{(1)}}{t_b^{(1)}-t_c^{(1)}}
\frac{1}{t_c^{(1)}-t_b^{(2)}}
\prod_{a=1, a\neq b}^{k}\frac{1}{t_a^{(1)}-t_a^{(2)}}\right\rangle. 
\end{equation*}
Hence, we obtain
\begin{equation*}
X_1+X_4
=(k-1)\langle \varphi_k\rangle+
\left\langle \sum_{c=1}^{k}\frac{-t_b^{(2)}}{t_b^{(2)}-t_c^{(1)}}
\varphi_k\right\rangle.
\end{equation*}
In the similar way, we obtain 
\begin{equation*}
X_2+X_3=
(k-1)\langle \varphi_k\rangle
+\left\langle \sum_{c=1}^{k}\frac{-t_b^{(1)}}{t_b^{(1)}-t_c^{(2)}}
\varphi_k\right\rangle.
\end{equation*}
Consequently, by the invariance under the action of $\frak{S}_{m_1}\times \frak{S}_{m_2}$, 
we have 
\begin{equation}\label{eq sum x}
\sum_{i=1}^4X_i=(k-2)\langle \varphi_k\rangle. 
\end{equation}
Substituting the result \eqref{eq sum x} into \eqref{eq nabla for t 2}, we obtain the relation \eqref{eq nabla t^2 1 k}. 

The other relations \eqref{eq nabla t^2 2 k}, \eqref{eq nabla t^2 1 m_1}, 
and \eqref{eq nabla t^2 2 m_2} 
can be verified in the similar manner,  by computing 
$\langle\kappa\left(\nabla_b^{(1)}t_b^{(1)} + \nabla_b^{(2)}t_b^{(2)}\right)\varphi_k \rangle$, 
$\langle\kappa\nabla_b^{(1)}t_b^{(1)} \varphi_k \rangle$, 
and $\langle\kappa\nabla_b^{(2)}t_b^{(2)} \varphi_k \rangle$, respectively.
\end{proof}

\section{Monodromy preserving deformation}

As mentioned in the introduction, our confluent KZ equations may be
viewed as a quantization of Monodromy preserving deformation. In
this section we give the explicit correspondence following the
process in \cite{JNS}. Below we use the parameter $\hbar=1/\kappa$
in place of $\kappa$.

Recall that the confluent KZ equations \eqref{CKZ1}-\eqref{CKZ3} are
defined from the follwing date: collection of Verma modules
$V^{(i)}$ attached to each $z_i$ ($i=1,\ldots,n$), and confluent
verma module $V^{(\infty)}$ at $\infty$. Let $U$ be an invertible
matrix solution to this system. We enlarge these date by adjoining
the natural representation $\C^N$ of $\sl_N$ at the point $z=z_0$
with Poincar\'e rank $0$. Let $\widetilde{U}$ be the matrix solution
to the corresponding system \eqref{CKZ1}-\eqref{CKZ3}. Let us
consider the quantity $Y(z)=U^{-1}\widetilde{U}$. The following
equations immediately follow from the confluent KZ equations:
\begin{align}
&\frac{\partial}{\partial z}Y=A(z)Y,\label{schlesinger1}\\
&\frac{\partial}{\partial z_i}Y=B^{(i)}_{-1}Y,\label{schlesinger2}\\
&\frac{\partial}{\partial\gamma_p}Y=B^{(\infty)}_{0,p}Y,\label{schlesinger3}\\
&\frac{\partial}{\partial\mu_p}Y=B^{(\infty)}_{1,p}Y.\label{schlesinger4}
\end{align}
Here we have set
\begin{align*}
&A(z)=\hbar U^{-1}\widetilde{G}^{(0)}_{-1}U\\
&\quad\quad=\hbar
U^{-1}\left(\sum^n_{j=1}\frac{\Omega^{(j)}}{z-z_j}-\sum^{N-1}_{p=1}\gamma_pw_p-\sum_{\alpha\in\Delta}
e_{\alpha}^{(\infty)}[1]e_{-\alpha}-z\sum^{N-1}_{p=1}\mu_pw_p\right)U,\\
&B^{(i)}_{-1}=\hbar U^{-1}(\widetilde{G}^{(i)}_{-1}-G^{(i)}_{-1})U\\
&\quad=-\hbar U^{-1}\frac{\Omega^{(i)}}{z-z_i}U,\\
&B^{(\infty)}_{1,p}=\hbar U^{-1}(-\mathcal{H}_p^{(1)}+\widetilde{\mathcal{H}}_p^{(1)})U\\
&\quad=\hbar U^{-1}\left(-zw_p-\sum_{\alpha\in
J_p}\frac{1}{\mu_{\alpha}}\left(e_{-\alpha}^{(\infty)}[1]e_{\alpha}+e_{\alpha}^{(\infty)}[1]e_{-\alpha}\right)\right)U,\\
&B^{(\infty)}_{2,p}=\hbar U^{-1}(-\mathcal{H}_p^{(2)}+\widetilde{\mathcal{H}}_p^{(2)})U\\
&\quad\quad=\frac{\hbar}{2}U^{-1}\left(-z^2w_p-\sum_{\alpha\in
J_p}\frac{z}{\mu_{\alpha}}\left(e^{(\infty)}_{-\alpha}[1]e_{\alpha}+e^{(\infty)}_{\alpha}[1]e_{-\alpha}\right)+\sum_{\alpha\in
J_p}\frac{1}{\mu^2_{\alpha}}e^{(\infty)}_{-\alpha}[1]e^{(\infty)}_{\alpha}[1]h_{\alpha}\right.\\
&\left.\quad\quad+ \sum_{\alpha\in
J_p}\frac{\gamma_{\alpha}}{\mu^2_{\alpha}}\left(e^{(\infty)}_{-\alpha}[1]e_{\alpha}+e^{(\infty)}_{\alpha}[1]e_{-\alpha}\right)
+\sum_{\alpha\in
J_p}\frac{1}{\mu_{\alpha}}(e_{-\alpha}e_{\alpha}+e_{-\alpha}E_{\alpha}+E_{-\alpha}e_{\alpha})\right)U.
\end{align*}
The integrability condition for
\eqref{schlesinger1}-\eqref{schlesinger4} gives rise to a system of
non-linear differential equations with respect to the \lq time'
variables $\bf{z}$ and $\gamma$, $\mu$. These are the quantization
of the irregular Schlesinger  equations.

\begin{lem}
The quantized Schlesinger system given above are Hamiltonian
equations with the time-independent Hamiltonians
\begin{align*}
H^{(i)}_{-1}=\hbar U^{-1}G^{(i)}_{-1}U
\end{align*}
for $i=1,\ldots,n$, and
\begin{align*}
H^{(\infty)}_{1,p}=\hbar U^{-1}\mathcal{H}^{(1)}_{p}U,\\
H^{(\infty)}_{2,p}=\hbar U^{-1}\mathcal{H}^{(2)}_{p}U
\end{align*}
for $p=1,\ldots,N-1$.
\end{lem}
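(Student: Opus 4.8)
The plan is to recognize that the lemma is essentially a formal consequence of the integrability (compatibility) of the quantized Schlesinger system \eqref{schlesinger1}--\eqref{schlesinger4}, combined with the fact that all the coefficient matrices $A(z)$, $B^{(i)}_{-1}$, $B^{(\infty)}_{1,p}$, $B^{(\infty)}_{2,p}$ were produced by conjugating \emph{fixed} (time-independent) elements of the universal enveloping algebra by the solution matrix $U$. The key structural observation is: if $Y$ satisfies $\partial_s Y = B_s Y$ for each time variable $s$ (here $s$ ranges over $z,z_i,\gamma_p,\mu_p$), and if in addition $B_s = \hbar\, U^{-1} C_s\, U$ where $C_s$ is \emph{independent of all times} and $U$ itself satisfies $\hbar\,\partial_s U = C_s U$ (which is exactly what the confluent KZ equations \eqref{CKZ1}--\eqref{CKZ3} say, rewritten multiplicatively for the fundamental matrix $U$), then the Hamiltonian structure is automatic: one takes $H_s := \hbar\, U^{-1} C_s U$, which coincides with $B_s$ (or the stated $H^{(i)}_{-1}$, $H^{(\infty)}_{j,p}$), and checks that Hamilton's equations reproduce the deformation equations.

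\medskip

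\textbf{Step 1.} First I would make precise the symplectic/Poisson setup: the phase space is the coadjoint-type space on which the entries of $A(z)$ (equivalently, the matrices $A^{(i)}_0$, $B_1$, $B_2$) are coordinates, with the Lie--Poisson bracket coming from $\oplus_{i=1}^n\g\oplus\g_{(2)}$; the classical limit $\hbar\to 0$ of the conjugated quantities $\hbar\,U^{-1}XU$ gives the functions on this phase space whose Poisson brackets are governed by the structure constants. Concretely, $\hbar\, U^{-1}XU$ and $\hbar\,U^{-1}YU$ have commutator $\hbar\,U^{-1}[X,Y]U = \hbar\cdot\hbar\,U^{-1}[X,Y]U/\hbar$, and dividing by $\hbar$ exhibits the Poisson bracket of the classical observables. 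So I would state that the map $X\mapsto \hbar\,U^{-1}XU$ intertwines the Lie bracket on $\mathcal{U}(\oplus_i\g\oplus\g_{(2)})$ with $\hbar$ times the (quantized) Poisson bracket on the phase space.

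\medskip

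\textbf{Step 2.} Next I would verify that each deformation equation \eqref{schlesinger2}--\eqref{schlesinger4} is Hamilton's equation for the proposed Hamiltonian. For any observable $\mathcal{O} = \hbar\,U^{-1}XU$, differentiating in a time $s$ and using $\hbar\,\partial_s U = C_s U$, $\hbar\,\partial_s U^{-1} = -U^{-1}C_s$ gives
\begin{equation}
\hbar\,\partial_s \mathcal{O} = \hbar\,U^{-1}[X, C_s]U = [\,\mathcal{O}, H_s\,],
\end{equation}
where $H_s = \hbar\,U^{-1}C_s U$. For $s=z_i$ we have $C_{z_i}=G^{(i)}_{-1}$, giving $H^{(i)}_{-1}=\hbar\,U^{-1}G^{(i)}_{-1}U$; for $s=\gamma_p$, $C_{\gamma_p}=\mathcal{H}^{(1)}_p$, giving $H^{(\infty)}_{1,p}$; for $s=\mu_p$, $C_{\mu_p}=\mathcal{H}^{(2)}_p$, giving $H^{(\infty)}_{2,p}$. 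This is exactly the claimed Hamiltonian form, and since $G^{(i)}_{-1}$, $\mathcal{H}^{(1)}_p$, $\mathcal{H}^{(2)}_p$ are fixed elements of the enveloping algebra (no explicit time dependence), the Hamiltonians $H^{(i)}_{-1}$, $H^{(\infty)}_{j,p}$ are time-independent as functions obtained by this conjugation construction --- the only time dependence would have to come from an explicit $s$ in the defining expression, and there is none.

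\medskip

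\textbf{Step 3 (the main obstacle).} The delicate point --- and where I expect the real work to lie --- is justifying that ``time-independent'' holds in the precise Hamiltonian sense, i.e. that $\partial_s H_t = \partial_t H_s$ holds for all pairs of times $s,t$, equivalently that the $H$'s together with the closed $1$-form $\omega = \sum H_s\,ds$ pulled back along solutions are genuinely the Jimbo--Miwa--Ueno data, so that the Frobenius integrability of \eqref{schlesinger1}--\eqref{schlesinger4} is equivalent to the vanishing of the appropriate brackets. Here I would invoke the compatibility of the confluent KZ system (the Conjecture, which the Remark asserts is verified for $N<7$, or its classical $\hbar\to 0$ shadow which the introduction states \emph{is} proven via $d\log\tau$) to get $[\kappa\partial_s - C_s, \kappa\partial_t - C_t]=0$; conjugating by $U$ and using that $U$ solves the system then yields exactly $\partial_s H_t - \partial_t H_s = \hbar^{-1}[H_s, H_t]$ in the operator sense, which is the integrability of the quantized Schlesinger system and simultaneously certifies that the $H$'s furnish a consistent set of commuting (in the Poisson sense, after the classical limit) Hamiltonians. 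I would close by remarking that in the classical limit this recovers precisely Boalch's statement that the JMU deformation equations are Hamiltonian, now with the explicit Hamiltonians $\lim_{\hbar\to 0} H^{(i)}_{-1}$, $\lim_{\hbar\to 0} H^{(\infty)}_{j,p}$ --- which is the content advertised in the introduction as the ``exact formulas of Hamiltonians of MPD derived from $d\log\tau$''.
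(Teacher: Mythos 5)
Your proposal is correct in substance and rests on the same two inputs as the paper's proof --- that $U$ and $\widetilde{U}$ solve the confluent KZ system \eqref{CKZ1}--\eqref{CKZ3} and that this system is compatible --- but it is organized differently. The paper never differentiates conjugated observables directly: it conjugates the compatibility relation $[\widetilde{G}^{(0)}_{-1},\widetilde{G}^{(i)}_{-1}]=0$ to obtain $[H^{(i)}_{-1},A(z)]=-[B^{(i)}_{-1},A(z)]$, then combines this with the zero-curvature condition of \eqref{schlesinger1}--\eqref{schlesinger2} and the identification $\partial B^{(i)}_{-1}/\partial z=\overline{\frac{\partial}{\partial z_i}}A(z)$, arriving at $\frac{\partial}{\partial z_i}A(z)=\overline{\frac{\partial}{\partial z_i}}A(z)+[A(z),H^{(i)}_{-1}]$ (and analogously for $\gamma_p$, $\mu_p$), which is the precise sense in which the deformation equations are Hamiltonian. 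Your Step 2 reaches the commutator term by direct conjugation calculus, which is a legitimate and arguably more economical route: it needs only that $U$ solves the KZ system, whereas the paper routes the argument through the Lax-pair integrability of the enlarged system.

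The substantive caveat is that your formula $\partial_s\mathcal{O}=[\mathcal{O},H_s]$ holds only for $\mathcal{O}=\hbar U^{-1}XU$ with $X$ a fixed element of the enveloping algebra. The objects the lemma is actually about --- $A(z)=\hbar U^{-1}\widetilde{G}^{(0)}_{-1}U$ and the coefficient matrices carrying explicit $z_i$, $\gamma_p$, $\mu_p$ --- acquire the additional term $\overline{\partial_s}$ acting on the explicit time dependence, which is exactly the term the paper isolates (and spells out for $\mu_p$ as $\hbar U^{-1}\frac{\partial G^{(0)}_{-1}}{\partial\mu_p}U$). Without it, ``Hamilton's equations reproduce the deformation equations'' is not literally true for $A(z)$; the repair is immediate (apply your identity to the conjugates of fixed generators and use Leibniz over the explicitly time-dependent scalar coefficients), so this is a presentational gap rather than a fatal one. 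Two smaller points: your Step 3 (cross-derivative relations for the $H$'s, closedness of $\sum H_s\,ds$, the classical limit) goes beyond what the paper proves or needs --- the paper, like you, simply invokes the compatibility of \eqref{CKZ1}--\eqref{CKZ3} and stops at the Hamilton-form equations; and there is an $\hbar$ bookkeeping slip in Step 2, since with $\kappa\,\partial_sU=C_sU$ one gets $\partial_s\mathcal{O}=[\mathcal{O},H_s]$ rather than $\hbar\,\partial_s\mathcal{O}=\hbar U^{-1}[X,C_s]U$.
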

{\noindent\bf Proof:} The proofs follow from the integrability
conditions of \eqref{schlesinger1}-\eqref{schlesinger4} and the
integrability condition of the confluent equations
\eqref{CKZ1}-\eqref{CKZ3}.

We consider the case $\lambda=i$ ($i=1,\ldots,n$). The integrability
condition of the confluent equations \eqref{CKZ1}-\eqref{CKZ3} gives
$[\widetilde{G}^{(0)}_{-1},\widetilde{G}^{(i)}_{-1}]=0$, thus we get
$[\widetilde{G}^{(0)}_{-1},\widetilde{G}^{(i)}_{-1}-G^{(i)}_{-1}]=[\widetilde{G}^{(0)}_{-1},G^{(i)}_{-1}]$.
Conjugating with $U^{-1}$ we obtain
\begin{align*}
[H^{(i)}_{-1},A(z)]=-[B^{(i)}_{-1},A(z)].
\end{align*}

The integrability condition of \eqref{schlesinger1},
\eqref{schlesinger2} implies
\begin{align*}
0=[\frac{\partial}{\partial z}-A(z),\frac{\partial}{\partial
z_i}-B^{(i)}_{-1}]=\frac{\partial A(z)}{\partial z_i}-\frac{\partial
B^{(i)}_{-1}}{\partial z}+[A(z),B^{(i)}_{-1}]].
\end{align*}
We note that
\begin{align*}
\frac{\partial B^{(i)}_{-1}}{\partial z}=\frac{\hbar
U^{-1}\Omega^{(i)}U}{(z-z_i)^2}=\overline{\frac{\partial}{\partial
z_i}}A(z),
\end{align*}
where $\overline{\frac{\partial}{\partial z_i}}$ means
$\frac{\partial}{\partial z_i}$ acting only on the time variables,
regarding the dynamical variables as constant. In summary, we get
\begin{align*}
\frac{\partial}{\partial z_i}A(z)=\overline{\frac{\partial}{\partial
z_i}}A(z)+[A(z),H^{(i)}_{-1}],
\end{align*}
showing that $H^{(i)}_{-1}$ is the Hamiltonian for the time variable
$z_i$.

 The proof for $\gamma_p$, $\mu_p$ are similar, and we note that
 for $\mu_p$, $\overline{\frac{\partial}{\partial
\mu_p}}A(z)$ means $\hbar U^{-1}\frac{\partial
G^{(0)}_{-1}}{\partial\mu_p}U$. \qed

\section{Appendix: Singular}

We used a software Singular::Plural 3-1-0 to compute the commutativity
of the Hamiltonians.
In this appendix, we attach the program we used.

\begin{verbatim}
int n,m,k,l,s,t,p,q;
LIB  "nctools.lib";

n=3;
 // n corresponds to sl(n)

int i,j;

ring R=(0,r(1..(n-1)),m(1..(n-1)),z),(e(0..(n^2-1)),E(0..(n^2-1))),Dp;

// r(i) corresponds to the parameter \gamma_i.
// m(i) corresponds to the parameter \mu_i.
// e(i) corresponds to the root vector e^{(\infty)}_{\alpha}[1], for example,
// e(1)=e^{(\infty)}_{\alpha_1}[1],
// e(2)=e^{(\infty)}_{\alpha_1+\alpha_2}[1].
// E(i) corresponds to E_{\alpha}, for example, E(1)=E_{\alpha_1},
// E(2)=E_{\alpha_1+\alpha_2}

matrix d[2*n^2][2*n^2];

 //the matrix that defines the fundamental commutator relations

 //sum mu_i+...mu_j
poly S(1);

for(i=0;i<=(n^2-1);i++){

    S(1)=0;

    p=i / n;

    q=i%n;


    if(i%n>0){

        if(p < q){

        for(k=p+1;k<=q;k++){
            S(1)=S(1)+m(k);
        }

        d[i+1,n*q+p+1]=-S(1);


        }

    }


    for(j=i+1;j<=(n^2-1);j++){

        s=j / n;

        t=j%n;

        if(q==s){

            d[i+1+n^2,j+1+n^2]=-E(p*n+t);

        }

        if(t==p){

            d[i+1+n^2,j+1+n^2]=E(s*n+q);

        }

        if(q==s and t==p){

            d[i+1+n^2,j+1+n^2]=-E(p*n+t)+E(s*n+q);

        }


    }







}



print(d);
ncalgebra(1,d);
R;

//H(p) corresponds to Hamiltonians \mathcal{H}_p^{(2)}

for(i=1;i<=(n-1);i++){
    poly H(i);poly W(i);

    for(j=1;j<=i;j++){

        W(i)=W(i)+(n-i)*E(n*(j-1)+j-1);

    }


    for(j=i+1;j<=n;j++){

        W(i)=W(i)-(i)*E(n*(j-1)+j-1);
    }



}



poly Q(1),Q(2);
poly S(2),S(3);

for(p=1;p<=(n-1);p++){



    for(k=1;k<=n^2-1;k++){
        S(1)=0;Q(1)=0;
        i=k / n;
        j=k%n;

        if(i<p and p<=j){

            for(l=i+1;l<=j;l++){
                    S(1)=S(1)+m(l);Q(1)=Q(1)+r(l);
            }


            H(p)=H(p)-(z/S(1))*(e(n*j+i)*E(k)+e(k)*E(n*j+i));

            H(p)=H(p)+(1/S(1))*E(n*j+i)*E(k);

            H(p)=H(p)+(1/(S(1)^2))*e(n*j+i)*e(k)*(E(n*i+i)-E(n*j+j));

            H(p)=H(p)+(Q(1)/(S(1))^2)*(e(n*j+i)*E(k)+e(k)*E(n*j+i));

            H(p)=H(p)+(Q(1)^2/(S(1)^3))*e(n*j+i)*e(k);

            H(p)=H(p)-(1/S(1)^3)*e(n*j+i)^2*e(k)^2;


            for(m=j+1;m<=(n-1);m++){

                S(2)=0;Q(2)=0;
                for(l=i+1;l<=m;l++){
                    S(2)=S(2)+m(l);
                    Q(2)=Q(2)+r(l);
                }

                H(p)=H(p)-(1/(S(1)*S(2)))*(Q(1)/S(1)+Q(2)/S(2))
                    *(e(k)*e(n*j+m)*e(n*m+i)+e(n*j+i)*e(m*n+j)*e(n*i+m));

                H(p)=H(p)-(1/(S(1)*S(2)))*
                    (E(k)*e(n*j+m)*e(n*m+i)+E(n*j+i)*e(m*n+j)*e(n*i+m)
                    +e(k)*E(n*j+m)*e(n*m+i)+e(n*j+i)*E(m*n+j)*e(n*i+m)
                    +e(k)*e(n*j+m)*E(n*m+i)+e(n*j+i)*e(m*n+j)*E(n*i+m));


            }

            for(m=0;m<=i-1;m++){
                S(2)=0;Q(2)=0;
                for(l=m+1;l<=j;l++){
                    S(2)=S(2)+m(l);
                    Q(2)=Q(2)+r(l);
                }

                H(p)=H(p)+(1/(S(1)*S(2)))*(Q(1)/S(1)+Q(2)/S(2))
                    *(e(k)*e(n*m+i)*e(n*j+m)+e(n*j+i)*e(n*i+m)*e(n*m+j));


                H(p)=H(p)+(1/(S(1)*S(2)))
                    *(E(k)*e(n*m+i)*e(n*j+m)+E(n*j+i)*e(n*i+m)*e(n*m+j)
                    +e(k)*E(n*m+i)*e(n*j+m)+e(n*j+i)*E(n*i+m)*e(n*m+j)
                    +e(k)*e(n*m+i)*E(n*j+m)+e(n*j+i)*e(n*i+m)*E(n*m+j));
            }


            for(m=p;m<=n-1;m++){

                S(2)=0;

                for(l=i+1;l<=m;l++){

                    S(2)=S(2)+m(l);

                }


                if(m<j){

                    H(p)=H(p)+(1/(S(1)^2*S(2)))
                    *(e(m*n+j)*e(n*j+m)*e(n*j+i)*e(k)
                    -e(n*m+i)*e(n*i+m)*e(n*j+i)*e(k));
                }


                if(m>j){

                    H(p)=H(p)+(1/(S(1)^2*S(2)))
                    *(e(m*n+j)*e(n*j+m)*e(n*j+i)*e(k)
                    -e(n*m+i)*e(n*i+m)*e(n*j+i)*e(k));

                }
            }


            for(m=0;m<=p-1;m++){

                S(2)=0;

                for(l=m+1;l<=j;l++){

                    S(2)=S(2)+m(l);

                }
                if(m<i){

                    H(p)=H(p)+(1/(S(1)^2*S(2)))
                    *(e(n*i+m)*e(n*m+i)*e(n*j+i)*e(k)
                    -e(n*j+m)*e(n*m+j)*e(n*j+i)*e(k));
            }
                if(m>i){
                    H(p)=H(p)+(1/(S(1)^2*S(2)))
                    *(e(n*i+m)*e(n*m+i)*e(n*j+i)*e(k)
                    -e(n*j+m)*e(n*m+j)*e(n*j+i)*e(k));

                }
            }

            //the part of three distinct roots
                    for(m=j+1;m<=n-2;m++){

                for(s=m+1;s<=n-1;s++){

                    S(2)=0;S(3)=0;

                    for(l=i+1;l<=m;l++){

                        S(2)=S(2)+m(l);

                    }

                    for(l=i+1;l<=s;l++){

                        S(3)=S(3)+m(l);

                    }

                    H(p)=H(p)+(1/(S(1)*S(2)*S(3)))
                            *(
                            e(k)*e(n*j+m)*e(n*m+s)*e(n*s+i)
                            +e(n*j+i)*e(n*m+j)*e(n*s+m)*e(n*i+s)

                            +e(k)*e(n*j+s)*e(n*s+m)*e(n*m+i)
                            +e(n*j+i)*e(n*s+j)*e(n*m+s)*e(n*i+m)

                            +e(n*i+m)*e(n*m+j)*e(n*j+s)*e(n*s+i)
                            +e(n*m+i)*e(n*j+m)*e(n*s+j)*e(n*i+s)
                            );


                }
            }

            for(m=1;m<=i-1;m++){
                for(s=0;s<=m-1;s++){


                    S(2)=0;S(3)=0;

                    for(l=m+1;l<=j;l++){

                        S(2)=S(2)+m(l);

                    }

                    for(l=s+1;l<=j;l++){

                        S(3)=S(3)+m(l);

                    }

                    H(p)=H(p)+(1/(S(1)*S(2)*S(3)))
                            *(
                            e(k)*e(n*j+m)*e(n*m+s)*e(n*s+i)
                            +e(n*j+i)*e(n*m+j)*e(n*s+m)*e(n*i+s)

                            +e(k)*e(n*j+s)*e(n*s+m)*e(n*m+i)
                            +e(n*j+i)*e(n*s+j)*e(n*m+s)*e(n*i+m)

                            +e(n*i+m)*e(n*m+j)*e(n*j+s)*e(n*s+i)
                            +e(n*m+i)*e(n*j+m)*e(n*s+j)*e(n*i+s)
                            );
                        }
                }

        for(m=i+1;m<=p-1;m++){
                for(s=(j+1);s<=(n-1);s++){
                    S(2)=0;S(3)=0;
                    for(l=(m+1);l<=j;l++){
                        S(2)=S(2)+m(l);
                    }

                for(l=(m+1);l<=s;l++){
                                S(3)=S(3)+m(l);
                }
                    H(p)=H(p)-((S(1)+S(3))/(S(1)*S(2)*S(3)*(S(1)+S(3)-S(2))))
                            *(
                            e(k)*e(n*j+m)*e(n*m+s)*e(n*s+i)
                            +e(n*j+i)*e(n*m+j)*e(n*s+m)*e(n*i+s)

                            +e(k)*e(n*j+s)*e(n*s+m)*e(n*m+i)
                            +e(n*j+i)*e(n*s+j)*e(n*m+s)*e(n*i+m)

                            +e(n*i+m)*e(n*m+j)*e(n*j+s)*e(n*s+i)
                            +e(n*m+i)*e(n*j+m)*e(n*s+j)*e(n*i+s));
                    }
            }
        }
    }
}

for(p=1;p<=n-1;p++){

    H(p)=n*H(p);

    H(p)=H(p)-z^2*W(p);
}

for(p=1;p<=n-1;p++){
    for(q=p+1;q<=n-1;q++){
        bracket(H(p),H(q));
    }
}
\end{verbatim}

\bigskip
\bigskip
\noindent{\it \bf  Acknowledgement.}\quad

The authors are grateful to M. Jimbo for many helpful discussions and suggestions.

\bigskip
\bigskip

\end{document}